\documentclass[letterpaper,10pt,conference]{ieeeconf}
\IEEEoverridecommandlockouts
\overrideIEEEmargins
\usepackage{graphicx, subcaption}
\graphicspath{{.},{./Images/}}

\usepackage{algorithm2e}
\usepackage{booktabs}
\usepackage[usenames,dvipsnames]{xcolor}
\usepackage{etoolbox}

\usepackage{amsthm}

\usepackage{color}
\usepackage{amsmath}
\usepackage{amssymb}
\usepackage{graphicx}
\usepackage{comment,xspace}
\usepackage{fancybox}



\setcounter{secnumdepth}{3}

\newcommand{\real}{{\mathbb{R}}}
\newcommand{\reals}{\real}

\renewcommand{\natural}{{\mathbb{N}}}
\newcommand{\naturals}{\natural}


\newtheorem{theorem}{Theorem}[section]
\newtheorem{proposition}[theorem]{Proposition}
\newtheorem{lemma}[theorem]{Lemma}

\newtheorem{definition}[theorem]{Definition}

\newcommand{\argmin}{\operatornamewithlimits{argmin}}

\title{On the fundamental  limitations of performance for \\distributed decision-making in robotic networks}
\author{Federico Rossi \thanks{Federico Rossi and Marco Pavone are  with the Department of Aeronautics and Astronautics, Stanford University, Stanford, CA, 94305, \{{\tt\small frossi2}, {\tt\small pavone}\}{\tt \small @stanford.edu}} \and Marco Pavone}

\begin{document}
\maketitle
\thispagestyle{empty}
\pagestyle{empty}

\begin{abstract}
This paper studies fundamental limitations of performance for distributed decision-making in robotic networks. The class of decision-making problems we consider encompasses a number of prototypical problems such as average-based consensus as well as distributed optimization, leader election, majority voting, MAX, MIN, and logical formulas. We first propose a formal model for distributed computation on robotic networks that is based on the concept of I/O automata and is inspired by the Computer Science literature on distributed computing clusters. Then, we present  a number of bounds on time, message, and byte complexity, which we use to discuss the relative performance of a number of approaches for distributed decision-making. From a methodological standpoint, our work sheds light on the relation between the tools developed by the Computer Science and Controls communities on the topic of distributed algorithms.
\end{abstract}
\section{Introduction}
Decentralized decision-making in robotic networks is a ubiquitous problem, with applications as diverse as state estimation \cite{ROS:07}, formation control \cite{WR-RWB-EMA:07}, and cooperative task allocation \cite{MdW-BC:09}. In particular, the consensus problem, where the nodes in a robotic network have to agree on some common value, has received significant attention in the last decade following the works in \cite{AJ-JL-ASM:02, ROS-RMM:03c}. Most recent efforts in the Controls community have primarily focused on studying the properties and fundamental limitations of \emph{average-based} consensus, a subclass of the consensus problem where nodes average their state with their neighbors at each time step \cite{AO:10}. In these works, the dominant performance metric is time complexity, i.e., convergence time. In contrast, the computer science community has mainly focused its attention on the complementary notion of communication complexity, and ``communication-optimal" algorithms for selected consensus problems are now known \cite{NL:96}.

Despite the large interest in consensus problems in the last decade, little attention has been devoted to the problem of studying fundamental limitations of performance with respect to time and communication (in its broadest sense)  complexity. In \cite{AO:10} the author proposes a lower bound on the time complexity of \emph{average-based} consensus algorithms; several lower bounds on the \emph{message} complexity of specific instances of the consensus problem are known in the CS literature \cite{NL:96}, but no comprehensive study for more general distributed decision-making problems  is available. 

This motivates our work: in this paper, we explore tight lower bounds on the complexity of a large class of consensus problems with respect to metrics relevant to robotic systems. 

To appreciate the value of the time and communication complexity metrics, consider the following two scenarios.

Single-hop latency of robotic wireless communication protocols is often in excess of 10ms. The popular 802.15.4 protocol is bandwidth-limited: transmission of a 768 bit message containing the state of a 6DOF vehicle requires 7ms at the maximum allowable bit rate of  115200bps; latencies four to five times higher are typically observed in a laboratory environment. On the other hand, latency of high-bandwidth protocols such as 802.11 (WiFi) is greatly influenced by collisions: in presence of dozens of agents, latency is observed to consistently rise above 10ms.
Consider a typical network of 50 robotic agents in an arbitrary configuration. In Sec. \ref{sec:lbs} we note that the worst-case optimal time complexity of the consensus problem is $\Theta(n)$, with a constant factor close to one: thus, consensus can be achieved in approximately 500ms. On the other hand, the popular average-based consensus protocol has a convergence rate of $O(n^2) \cite{AO-JNT:07}$: convergence can require as much as $25s$ even on a static network. Intuitively, choosing a suitably fast consensus algorithm allows to track and control systems with orders-of-magnitude faster dynamic behavior.

On the other hand, energy consumption is also a concern in cyber-physical networks. Consider a swarm of autonomous underwater vehicles (AUV) tasked with performing a collaborative mission such as patrolling. Underwater ultrasonic communication has significantly higher energy demands than radio transmissions:  for instance, in \cite{MS-TS-TT:92}, the authors use 18W to maintain a 16 kbps link (sufficient to stream rich telemetry or low-quality images) with $35^\circ$ antennas over 6500 m. For underwater operations, omnidirectional communication is unadvisable because of intersymbol interference caused by multipath propagation, a phenomenon exacerbated by wide-beam transceivers. Hence, underwater communication typically relies on directional antennas and communication with $n$ agents requires $n$ \emph{different} messages. Consider, now, an all-to-all communication scheme to reach consensus (as needed by flooding and average-based consensus algorithms): in this case, communication with 50 AUVs would require 900 W per vehicle, which is impractical (as a comparison, electric motors on modern Remotely Operated Vehicles (ROVs) such as NOAA's Autonomous Benthic Explorer only draw 100 W in cruise \cite{DRY-AMB-BBW:91}). Hence, in this setting, a time-optimal algorithm such as flooding cannot be implemented. In contrast, a communication-optimal algorithm such as GHS  \cite{RGG-PAH-PMS:83} requires $O(\log_2 n)$ messages per AUV, resulting in a more practical power requirement of  102 W.

We mainly restrict our analysis to static networks, but also provide some extensions to time-varying network topologies. The contribution of our work is threefold: we extend results from the computer science literature to a broad class of distributed decision-making problems (collectively referred to as generalized consensus)  relevant to the control systems and robotic community; we also present a unified complexity theory for generalized consensus on static networks, identifying lower bounds on performance for metrics relevant to robotic systems. Finally, we discuss  algorithms that simultaneously make one or more of these bounds tight.

  The paper is structured as follows. In Section \ref{sec:background} we propose a formal model for robotic networks, complexity measures and a rigorous definition of the consensus problem that encompasses problems including (weighed) mean as well as MAX, MIN and voting. In Section \ref{sec:lbs} we present lower bounds on the time, message, and byte complexity of the consensus problems for \emph{sparse} and \emph{dense} networks. In Section \ref{sec:upper} we show tightness of these lower bounds under mild assumptions. Finally, in Section \ref{sec:conclusions} we draw our conclusions and we discuss directions for future work. Lower bounds and optimality results are summarized in Table \ref{tab:consbounds}.

We remark two rather interesting findings. On static networks, a modified version of the GHS minimum spanning tree algorithm solves the consensus problem with \emph{simultaneously} optimal time, message, byte and storage complexity. Intuitively, if ``few'' link reconfigurations are expected, algorithms based on construction of a spanning tree are superior to nonhierarchical schemes.
Conversely, on time-varying networks, a simple \emph{flooding} algorithm Pareto-dominates average-based consensus in all metrics we investigate except for storage complexity (arguably a minor concern in modern robotic systems). However, in practical implementations average-based consensus may be preferable to flooding whenever time performance is not critical and \emph{bandwidth} (whose role we discuss in \cite{FR-MP:14b}) is limited.

\begin{table*}[h!tb]{\footnotesize
\begin{subtable}[h]{\textwidth}
\centering
\begin{tabular}{rcccc}
& Time & Message (S) &Message (D) & Msg. (broadcast) \\
\toprule
Lower bound & n& $n\log n$ & $n^2$ & $n\log n$\\ 
\hline
Flooding (no failures) & \textbf{n} & $n^2 \log n$ & $n^3$ &$ n^2$\\ 
GHS modified (no failures) &$\textbf{n}$ \cite{BAw:87} & $\mathbf{n\log n}$  & $\mathbf{n^2}$ & $\mathbf{n\log n}$\\
Avg. based (no failures, $\varepsilon$) & $n^2\log (1/\varepsilon)$& $n^3 \log n$& $n^4$& $n^3$\\
Hybrid clustering \cite{FR-MP:13} &$n\log (n/m)$& $ 2mn + k|E_c|m $ & $ 2mn + k|E_c|m $ & $n(m+\log(n/m)$\\
\hline
Flooding (D-connectivity) &$\mathbf{nD}$ & $n^2D \log n $&$n^3D$& $n^2D$\\
Avg.-based (D-connectivity) &$n^2D\log(1/\varepsilon)$&$n^3D\log n$&$n^4D $&$n^3D$\\
\bottomrule
\\
& Byte (S)&Byte (D)& Byte (broadcast) &Storage\\
\toprule
Lower bound &  $(n\log n)(\log n + b)$  &  $n^2(\log n + b)$ &$(n\log n)(\log n + b) $& $\log n + b$\\ 
\hline
Flooding (no failures) & $ n^2\log n (\log n +b)$ & $ n^3(\log n +b)$ &$n^2(\log n + b) $ & $n(\log n+b)$\\ 
GHS modified (no failures) &$\mathbf{(n\log n)(\log n + b)}$& $\mathbf{n^2(\log n + b)}$&  $\mathbf{(n\log n)(\log n + b)}$ & $\mathbf{\log n + b}$\\
Avg. based (no failures, $\varepsilon$) &  $n^3 \log n(\log n + b) $&$n^4(\log n + b) $& $n^3 (\log n + b) $ & $\mathbf{\log n + b}$\\
Hybrid clustering \cite{FR-MP:13} & $m(n+k|E_c|)(\log n+b)$ &  $m(n+k|E_c|)(\log n+b)$ & $n(m+\log(n/m)\log n$ &$m(\log n + b)$ \\
\hline
Flooding (D-connectivity) &$n^3D\log  n(\log n+b)$&$n^4D (\log n+b)$ & $n^3D(\log n + b)$ &$n(\log n + b)$\\
Avg.-based (D-connectivity) &$n^3D\log n(\log n+b)$&$n^4D (\log n+b)$ & $n^3 D (\log n + b)$ &$\mathbf{\log n + b}$\\
\bottomrule
\end{tabular}
\end{subtable}
\caption{\small Synoptic view of bounds for distributed consensus. Bounds on time complexity hold for all consensus functions, bounds on message complexity hold for locally-sensitive or extractive consensus functions, and, finally, bounds on byte  and storage complexity hold for locally-sensitive or extractive consensus functions that are hierarchically computable. S and D denote, respectively, sparse and dense graphs. The number of agents is denoted as $n$ and the number of communication links is $|E|$. $m$ is a tuning parameter assuming values in $\{1,\ldots,n\}$. The parameter $|E_c|$ is $|E_c|=O(\min(E,m^2))$. Time-varying networks are $D$-connected \cite{AO:10}.  For average-based algorithms, $\varepsilon$ is the convergence threshold for termination. We denote in bold face optimality results.\vspace{-.7cm}} 
\label{tab:consbounds}
}
\end{table*}

\section{Problem Setup}
\label{sec:background}
In this section we discuss the network model and we define the distributed consensus problem we will study in this paper. A simplified version of our model has first been introduced by the authors in \cite{FR-MP:13}.

\subsection{Agent model}

An agent in a robotic network is modeled as an input/output (I/O) automaton, i.e., a labeled state transition system able to send messages, react to received messages and perform arbitrary internal transitions based on the current state and on any messages received. 
A precise definition of I/O automaton is provided in \cite[pp. 200-204]{NL:96} and is omitted here in the interest of brevity. All nodes are identical except for a unique identifier (UID - for example, an integer). The time evolution of each node in the graph $G$ is characterized by two key assumptions:

\begin{itemize}
\item {\bf Fairness assumption}: the order in which transitions happen and messages are delivered is not fixed a priori. However, any enabled transition will \emph{eventually} happen and any sent message will \emph{eventually} be delivered.
\item {\bf Non-blocking assumption}: every transition is activated within $l$ time units of being enabled and every message is delivered within $d$ time units of being dispatched.
\end{itemize}
Essentially, the fairness assumption states that every node will have an opportunity to perform transitions, while the non-blocking assumption gives timing guarantees (but no synchronization). We refer the interested reader to \cite[pages 212-215]{NL:96} for a detailed discussion of these assumptions. We argue here that these are \emph{minimal} assumptions for most reliable real-world robotic networks. 

\subsection{Network model}\label{subsec:model}
A \emph{robotic network} with $n$ agents is modeled as a \emph{connected}, \emph{undirected} graph $G = (V,E)$, where $V = \{1,\ldots, n\}$ is the node set, and $E\subset V\times V$, the edge set,  is a set of \emph{unordered} node pairs modeling the availability of a communication channel. Two nodes $i$ and $j$ are neighbors if $(i, j)\in E$. The neighborhood set of node $i\in V$, denoted by $N_i$, is the set of nodes $j\in V$ neighbors of node $i$. Our model is \emph{asynchronous}, i.e., computation steps within each node and communication are, in general, asynchronous.

This paper focuses on \emph{static networks}, i.e., robotic networks where the edge set does not change during the execution of the algorithm. Given a node set with $n$ nodes, we will consider two classes of graphs:
\begin{itemize}
\item {\bf Sparse graphs}: that is graphs where the number of edges $|E|$ is less than $n\log n$.
\item {\bf Dense graphs}: that is graphs where the number of edges $|E|$ is larger than or equal to $n\log n$.
\end{itemize}

Henceforth, we will denote the set of sparse graphs as $\mathcal G_s$ and the set of dense graphs as $\mathcal G_d$. 
Note that for any connected graph with $n$ nodes, $n-1 \leq |E|\leq {n\choose 2}$.

From a practical standpoint, sparse and dense graphs manifest themselves in different robotic problems and give rise to different issues. In dense graphs (present e.g. in formation control and rendezvous problems) time complexity is typically not an issue; on the other hand, message and byte complexity have to be carefully kept under control to avoid excessive bandwidth utilization and minimize message collisions. Especially for large graphs, it is crucial to ensure that agents only communicate with a small subset of their neighbors, even if many are available. On the other hand, in sparse graphs (which typically manifest themselves in patrolling and deployment applications, where large inter-agent distances are desirable) message complexity is not an issue; efficient routing of information, on the other hand, is crucial to ensure good time performance.

\subsection{Model of computation}

At a general level, in this paper we focus on decision-making problems where each node $i$, $i\in\{1, \ldots, n\}$, in the robotic network is endowed with an initial value $x_i$ and should output the value of a function of all initial values. In other words, each agent, after exchanging messages (with any content) with its neighbors and performing internal state transitions, should output $f(x_1, \ldots, x_n)$ for some computable function $f$, referred to as \emph{consensus} function. In the reminder of this section we formalize the notions of consensus functions and of decentralized algorithms.

\subsubsection{Consensus functions}
In this paper we consider functions defined over \emph{totally ordered sets}, that is we assume that the initial conditions $x_i$ belong to a set $\mathcal X$ equipped with a binary relation (denoted with $\leq$) which is transitive, antisymmetric, and total.  Two sets of initial conditions $A=\{a_1, \ldots, a_n\}$ and $B=\{b_1, \ldots, b_n\}$ are said to be \emph{order-equivalent} if $a_i< a_j \Leftrightarrow b_i<b_j$.
The set of initial conditions $\mathcal C$ can be, for example, $\naturals$, $\reals$, and $\reals^d$ (in the last case the total order could be the lexicographic order).

A consensus function is a \emph{computable} 
function $f : \mathcal X^n \mapsto \reals$ that depends on \emph{all} its arguments. More precisely, for each element $x = (x_1, \ldots, x_n) \in \mathcal X^n$ and for all $i\in\{1, \ldots, n\}$ one can find elements $x_{i}^{(1)}\in \mathcal X$ and $x_{i}^{(2)}\in \mathcal X$ such that
\[
f(x_1, \ldots, x_{i}^{(1)}, \ldots, x_n)\neq f(x_1, \ldots, x_{i}^{(2)}, \ldots, x_n).
\]
Loosely speaking, such choice of consensus function implies that each node is needed for the collective decision-making process.

For some of the results presented in this paper, we will need the following  \emph{refinements} of the notion of consensus function:
\begin{itemize}
\item {\bf Locally-sensitive consensus function}: a consensus function that is sensitive to perturbations that preserve order. More precisely, let $\mathcal I = \{1,2, \ldots, n\}$ be the set of node indices and let $\sigma:\mathcal I \mapsto \mathcal I$ be a permutation 
(i.e., a bijective correspondence) over $\mathcal I$. Then, for each permutation $\sigma$ over $\mathcal I$ there exists an element $x\in \mathcal X^n$ that is ordered with respect to $\sigma$, i.e., $x_{\sigma(i)} \leq x_{\sigma(j)}$ for all $i<j$ and such that for all $i\in \{1, \ldots, n\}$ there exist $x_{\sigma(i)}^{(1)}, x_{\sigma(i)}^{(2)} \in \mathcal X$, $x_{\sigma(i)}^{(1)} \neq  x_{\sigma(i)}^{(2)}$ with the properties:
\begin{enumerate}
\item $x_{\sigma(i)}^{(1)}\leq x_{\sigma(j)}$ and  $x_{\sigma(i)}^{(2)}\leq x_{\sigma(j)}$ for all $i<j$,
\item $x_{\sigma(j)}\leq x_{\sigma(i)}^{(1)}$ and $x_{\sigma(j)}\leq x_{\sigma(i)}^{(2)}$ for all $j < i$,

\item $f(x_1, \ldots, x_{\sigma(i)}^{(1)}, \ldots x_n) \!\neq \!f(x_1, \ldots, x_{\sigma(i)}^{(2)}, \ldots x_n)$.
\end{enumerate}
(The first two properties ensure that $x_{\sigma(i)}^{(1)}$ and $x_{\sigma(i)}^{(2)}$ preserve the order of $x$, while the last property reflects local sensitivity.)
\item {\bf Extractive consensus function}: a consensus function such that for all $x=(x_1, \ldots, x_n)\in \mathcal X^n$ one has $f(x_1,\ldots, x_n)=x_j$ for some $j\in\{1, \ldots, n\}$.
\end{itemize}
The classes of locally-sensitive and extractive consensus functions are neither mutually exclusive nor collectively exhaustive (however, they represent,  arguably, a very broad class of consensus functions of interest to applications). Loosely speaking, locally-sensitive consensus functions model problems where the decision-making process depends \emph{continuously} on the initial values $\{x_i\}$, for example for average consensus 
$
f(x) = c^T\, x,
$ 
where $\mathcal X  =\reals^n$, $x\in \mathcal X$, and $c$ is a vector in $\reals^n$, or for 
distributed optimization
\[
f(x) = \argmin_{z\in \reals^n} \, \sum_{i=1}^n \, \varphi_i(z,x_i)
\]
where the objective function is \emph{parametrized} by $x_i$, under certain conditions on $\varphi_i$ (consider for instance $\varphi_i$ a positive-semidefinite quadratic form parametrized by $x_i$).
On the other hand, extractive consensus functions model leader-election problems or problems where it is desired to extract some statistics from the data, e.g., MAX and MIN. 

Finally, we introduce a \emph{representation} property for consensus functions that will be instrumental to deriving fundamental limitations of performance in terms of amount of information exchanged. A locally-sensitive or extractive consensus function is hierarchically computable if it can be written as  the composition of  a commutative and associative binary operator $\ast$, that is
\[
f(x_1, x_2,\ldots, x_n) = x_1 \ast x_2\ast \ldots\ast x_n.
\]
(The name is inspired by the observation that hierarchically computable functions can be computed with messages of small size on a \emph{hierarchical} structure such as a  tree). All examples of consensus functions mentioned above are indeed hierarchically computable.

\subsection{Model of communication}
Nodes can communicate with their neighbors according to two communication schemes: \emph{directional} and \emph{local broadcast}. In the directional communication scheme a node sends messages to each neighbor individually. This is the case when nodes in the network are equipped with narrow-band, high-gain mechanically or electronically steerable antennas. In the local broadcast communication scheme, a node sends a message to all its neighbors simultaneously. This is the typical case for nodes equipped with omnidirectional antennas.

\subsection{Distributed algorithms}
A distributed algorithm for a robotic network is, simply, a collection of \emph{local} algorithms, one for each node of the network (of course, nodes can exchange messages). Nodes execute the same logical code. Each node is initialized with an initial condition $x_i \in \mathcal X$.
A distributed  algorithm correctly computes a given function if, given an input $x\in \mathcal X^n$, \emph{each} node outputs the correct value of the function $f(x_1, \ldots, x_n)$, and terminates \cite{NL:96} in \emph{finite} time. Specifically, the algorithm should output the correct value in at most  $R$ rounds (for synchronous executions) or after $R\, (l+d)$ time units (for asynchronous executions). $R$ can be an arbitrarily large number, hence this assumption is not limiting from a practical standpoint; however, it will be key to deriving some of our results.
Note that, for asynchronous executions, termination is not simultaneous. 

A particular class of distributed algorithms that will be instrumental to derive some of the results is represented by comparison-based algorithms, defined next. 
\begin{definition}[Comparison-based algorithms (\cite{GF-NL:87, MS:85})]
\label{def:compbased} 
A distributed algorithm is comparison-based if each local algorithm manipulates the subset of (totally ordered) initial values that are locally known only via the three Boolean operators $<$,$ >$, and $=$ (recall that the set of initial values is a totally ordered set).
Accordingly, all internal transitions (including message generation) only depend on the \emph{order} of the initial values known to the local algorithm, as opposed to their numerical value. 
\end{definition}

\subsection{Complexity measures}\label{subsec:com}
The following definitions naturally capture the notions of time and communication complexity and are widely used in the theory of distributed algorithms  \cite{NL:96}.

Let $\mathcal G$ be a set of graphs with node set $V=\{1,\ldots, n\}$ (we are specifically interested in the class of sparse graphs $\mathcal G_s$ and in the class of dense graphs $\mathcal G_d$). For a given graph $G\in \mathcal G$, let $\mathcal F(a, x, G)$ be the set of \emph{fair executions}  for an algorithm  $a\in \mathcal A$ and a set of initial conditions $x \in \mathcal X^{n}$  (a fair execution is an execution of an algorithm that satisfies the fairness and non-blocking assumptions stated above).

\subsubsection{Time complexity}  To measure execution time, we assume that a distributed algorithm starts at time $t=0$. Time complexity is defined as the infimum worst-case (over initial values and fair executions) completion time of an algorithm. Rigorously, the time complexity for a given consensus function $f$ with respect to the class of graphs $\mathcal G$ is
\[
\textrm{TC}(f, G):=\inf_{a\in \mathcal A} \, \sup_{G\in \mathcal G}\, \sup_{x \in \mathcal X^{|G|}} \, \sup_{\alpha \in \mathcal F(a, x,G)} \, T(a, x, \alpha, G),
\]
where $T(a, x, \alpha, G)$ is the first time when all nodes have computed the correct value for the consensus function $f$ and have stopped.
The order of the inf-sup operands in the above expression is naturally induced by our definitions. By dropping the leading $\inf_{a\in \mathcal A}$, one recovers the time complexity of a given algorithm $a$ for a given consensus function $f$. In our asynchronous setting, time complexity is expressed in multiples of $l+d$, defined in section \ref{subsec:model} (see also  \cite{NL:96}). We will henceforth refer to $(l+d)$ as a \emph{time unit}. Note that $(l+d)$ is a (tight) upper bound on the actual time required to generate and deliver a message.
\subsubsection{Message complexity for directional communication} Message complexity is similarly defined as the infimum worst-case (over initial values and fair executions) \emph{number} of messages exchanged by an algorithm before completion. Rigorously, the message complexity for a given consensus function $f$ with respect to the class of graphs $\mathcal G$ is
\[
\text{MC}(f,\mathcal G)=\inf_{a\in \mathcal A} \, \sup_{G\in \mathcal G}\,\sup_{x \in \mathcal X^{|G|}} \, \sup_{\alpha \in \mathcal F(a, x,G)} \, M(a, x, \alpha,G),
\]
where $M(a, x, \alpha, G)$ is the number of messages exchanged between time $t=0$ and  time $t = T(a, x, \alpha,G)$.

It is important to note that the \emph{type} of messages exchanged depends on the algorithm. In average-based consensus algorithms, nodes typically exchange their state, a real number. In algorithms such as the well-known Gallager, Humblet and Spira (GHS) algorithm \cite{RGG-PAH-PMS:83}, nodes exchange a wide range of logical commands establishing hierarchical relationships, informing neighbors about the progress of the algorithm, and requiring them to perform edge searches  \cite{ BAw:87}. In flooding algorithms \cite{NL:96}, a single message may contain information from up to $n-1$ nodes. However, as far as message complexity is concerned, each message \emph{counts the same}, regardless of its type and size.
\subsubsection{Byte complexity for directional communication} 
In many instances, message size plays a critical role in the energy needed for information transmission. To capture this aspect, in this paper we define byte complexity as the infimum worst-case (over initial values and fair executions) \emph{overall size} (in bytes) of all messages exchanged by an algorithm before its completion. Rigorously, the byte complexity for a given consensus function $f$ with respect to the class of graphs $\mathcal G$ is
\[
\textrm{BC}(f,\mathcal G):=\inf_{a\in \mathcal A} \, \sup_{G\in \mathcal G}\,\sup_{x \in \mathcal X^{|G|}} \, \sup_{\alpha \in \mathcal F(a, x,G)} \, B(a, x, \alpha,G),
\]
where $B(a, x, \alpha, G)$ is the overall size (in bytes) of all messages exchanged  between time $t=0$ and  time $t = T(a, x, \alpha,G)$.

\subsubsection{Message and byte complexity for local broadcast communication}
The definitions of message and byte complexity for local broadcast communication parallels the definitions of message and byte complexity for directional communication.
Rigorously, the broadcast message complexity for a given consensus function $f$ with respect to a class of graphs $\mathcal{G}$ is
\[
\textrm{bMC}(f,\mathcal{G})=\inf_{a\in\mathcal{A}}\sup_{G\in\mathcal{G}}\sup_{x\in\mathcal X^{|G|}}\sup_{\alpha\in\mathcal{F}(\alpha,x,G)} bM(a,x,\alpha,G)
\]
where $bM(a,x,\alpha,G)$ is the overall number of \emph{broadcast} messages exchanged between time $t=0$ and  time $t = T(a, x, \alpha)$ (a broadcast message is a message sent by a node to \emph{all} its neighbors). Analogously, the broadcast byte complexity for a given consensus function $f$ with respect to a class of graphs $\mathcal{G}$ is
\[
\textrm{bBC}(f,\mathcal{G})=\inf_{a\in\mathcal{A}}\sup_{G\in\mathcal{G}}\sup_{k\in\mathcal X^{|G|}}\sup_{\alpha\in\mathcal{F}(\alpha,x,G)} bB(a,x,\alpha,G)
\]
where $bB(a,x,\alpha,G)$ is the size (in bytes) of all \emph{broadcast} messages exchanged between $t=0$ and $t = T(a, x, \alpha)$.

\subsection{Discussion}

It is of interest to compare our model of distributed consensus with the models for the consensus problem developed, respectively, by the Computer Science community and the Controls and Robotics community. In the Computer Science community, distributed consensus is typically defined as the task of computing via a distributed algorithm \emph{any} function of a set of initial conditions such that the following three properties are fulfilled: \emph{agreement} (no two processes decide on different values), \emph{validity} (in absence of failures, if all agents start with the same value, then every agent decides on that value) and \emph{termination} (all processes eventually decide) \cite{NL:96}. In the Controls and Robotics community, on the other hand, consensus is essentially a synonym for \emph{average-based} consensus, where agents compute an asymptotic approximation of a weighted average of their initial conditions via local communication \cite{AO:10,AJ-JL-ASM:02, ROS-RMM:03c,ROS:07,WR-RWB-EMA:07,MdW-BC:09}. Our model, thus, is more restrictive than the model considered in the Computer Science community (since the consensus function is required to fulfill some mild requirements), while it is (significantly) more general than the model considered in the Controls and Robotics community (since average-based consensus is a particular example of locally-sensitive consensus function defined over $\reals^n$).

We mention that we also studied \emph{storage complexity}, defined as the infimum worst-case (over initial values and fair executions) storage size required by every agent executing the consensus algorithm. We do not discuss this complexity notion here due to space limitation and because in most cases it is not a bounding factor. However, we report our results in the synoptic table (Table \ref{tab:consbounds}).

In the remainder of the paper we discuss fundamental limitations of performance of the distributed consensus problem, in terms of fundamental scalings of the different complexity measures with respect to the network size (i.e., the number of nodes $n$). We also discuss algorithms that, in many cases, recover such asymptotic bounds. Our asymptotic notation (e.g., $O(g(n))$ or $\Omega(g(n))$ is standard.

\section{Lower Bounds on Achievable Performance for Distributed Consensus}
\label{sec:lbs}
In this section we present lower bounds for the complexity measures introduced in Section \ref{subsec:com}. We will discuss the tightness of these bounds in Section \ref{sec:upper}.

\subsection{Time complexity}
A lower bound on time complexity can be obtained rather easily. 
\begin{proposition}[Lower bound on time complexity]\label{prop:lbtime}
 For a given consensus function $f$ and class of graphs $\mathcal G$ with $n$ nodes,  $\textrm{TC}(f, \mathcal G) \in \Omega(n)$. 
\end{proposition}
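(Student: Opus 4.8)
The plan is to show that for \emph{every} algorithm $a\in\mathcal A$ one can exhibit a graph $G$ in the class, an input, and a fair execution on which correct termination cannot occur before $\Omega(n)$ time units; since $\textrm{TC}(f,\mathcal G)$ is an infimum over $a$ of a supremum over $G$, $x$, and $\alpha$, this establishes $\textrm{TC}(f,\mathcal G)\in\Omega(n)$. The three ingredients are (i) a graph of diameter $\Omega(n)$ in the class, (ii) the fact that $f$ depends on every one of its arguments, and (iii) a causality/indistinguishability argument bounding the speed at which information can travel in the asynchronous model.

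For (i): the path $1-2-\cdots-n$ has diameter $n-1$ and only $n-1<n\log n$ edges, so it lies in $\mathcal G_s$. For $\mathcal G_d$ I would take the same path and add all edges inside the block $\{1,\dots,\lceil n/2\rceil\}$; this keeps $\operatorname{dist}(1,n)\ge\lfloor n/2\rfloor$ (any walk reaching node $n$ must traverse the path edges past vertex $\lceil n/2\rceil$, since every added edge stays inside the block) while raising the edge count to at least $\binom{\lceil n/2\rceil}{2}\ge n\log n$ for $n$ large. Either graph is connected, undirected, and simple, hence admissible. Fix a diametral pair $u,w$, so $D:=\operatorname{dist}(u,w)=\Omega(n)$.

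For (ii) and (iii): applying the definition of a consensus function at coordinate $u$ to any $\bar x\in\mathcal X^n$ yields values $\bar x_u^{(1)}\neq\bar x_u^{(2)}$ and inputs $x',x''$ (agreeing with $\bar x$ off coordinate $u$, carrying $\bar x_u^{(1)}$ resp.\ $\bar x_u^{(2)}$ at $u$) with $f(x')\neq f(x'')$. I would then run $a$ on $x'$ and on $x''$ under the \emph{same} deliberately slow schedule: every enabled transition fires exactly $l$ time units after it becomes enabled and every message is delivered exactly $d$ time units after dispatch — a legitimate fair, non-blocking execution. Because the value at $u$ can influence another node only through a chain of messages, and under this schedule each hop of such a chain consumes at least one time unit $(l+d)$, an induction on $t$ shows that the local history of every node at distance greater than $t$ from $u$ is identical in the two executions up to time $t(l+d)$; in particular $w$'s local history agrees up to time $\approx D(l+d)$. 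Hence $w$ cannot have halted with an output before then — were it to halt earlier it would decide the same value in both executions, contradicting $f(x')\neq f(x'')$ together with correctness. Therefore $T(a,x,\alpha,G)\ge D-O(1)=\Omega(n)$ time units for (at least) one of $x\in\{x',x''\}$, and since $a$ was arbitrary, $\textrm{TC}(f,\mathcal G)\in\Omega(n)$.

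The step I expect to be the main obstacle is making the causality claim in (iii) fully rigorous inside the I/O-automaton model: one must formalize ``the value at $u$ has not yet affected node $w$'' as a statement about indistinguishable executions, carry out the induction along the message-causality relation to pin down the propagation speed, and check that the slow schedule one writes down genuinely satisfies the fairness and non-blocking assumptions. The graph constructions and the invocation of the dependence property of $f$ are routine by comparison.
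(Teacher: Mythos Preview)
Your proposal is correct and follows the same indistinguishability argument as the paper: a graph of diameter $\Theta(n)$, two inputs differing only at one diametral endpoint, and the observation that the far node cannot have received the differing information in $o(n)$ time. The paper simplifies by restricting to synchronous executions rather than your explicit slow asynchronous schedule, and invokes only the line graph without your explicit dense-graph construction for $\mathcal G_d$, but the skeleton is identical.
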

\begin{proof} 
By contradiction. 
Let us assume that there exists a consensus algorithm $a$ that terminates in $o(n)$ time units for all graphs $G\in\mathcal{G}$, initial conditions $x\in\mathcal{X}$ and executions $\alpha$.
We restrict our analysis to synchronous executions of the algorithm (since synchronous executions are a special case of asynchronous executions, a lower bound with respect to the former is also a bound for the asynchronous case). We also consider a specific graph $G$ where the maximal distance between any two pairs of nodes is $\text{Diam}(G)=\Theta(n)$ (the \emph{line} graph is an example of one such graph). 
Then there exist two nodes $u$, $v$  such that $n$ time units are required for \emph{any} information from agent $u$ to reach agent $v$ and vice versa.
Now, consider two executions of the consensus algorithm that only differ in the initial value of agent $v$. Rigorously, we consider two sets of initial values $x^{(1)}, x^{(2)}$ with $x^{(1)}_v\neq x^{(2)}_v$ and $x^{(1)}_k=x^{(2)}_k\,\forall k\neq v$.
Since the algorithm terminates in fewer than $n$ time units, agent $u$ does not hear any information from agent $v$ in either execution: therefore its state is identical at the end of both executions. Then agent $u$ decides on the same consensus value in both executions. However, $f(x^{(1)})\neq f(x^{(2)})$ since $x^{(1)}_v\neq x^{(2)}_v$: we have reached a contradiction.
\end{proof}

\subsection{Message complexity}
\label{sec:mc}
In this section we restrict our attention to either locally-sensitive or extractive consensus functions. Our strategy is to find first a lower bound for ``dense" graphs and then a lower bound for "sparse" graphs.
We start with the former case.

\begin{proposition}[Lower bound on message complexity for dense graphs]
\label{lemma:ccfull}
For a given locally-sensitive or extractive consensus function $f$ ,  $\textrm{MC}(P, \mathcal G_d) \in \Omega(|E|)$.
\end{proposition}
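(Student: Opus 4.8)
The plan is to argue by contradiction: suppose there is an algorithm $a$ that correctly computes $f$ on all graphs in $\mathcal G_d$ using $o(|E|)$ messages in the worst case. Fix a dense graph $G=(V,E)$ (say the complete graph $K_n$, for which $|E|=\binom n2 \ge n\log n$), and fix initial conditions $x$ on which the algorithm in some fair execution $\alpha$ sends strictly fewer than $|E|$ messages. Then there is at least one edge $e=(u,v)\in E$ over which no message is ever sent during $\alpha$. The idea is that this ``silent'' edge is invisible to the algorithm, so removing it changes nothing about the execution, yet it can be used to connect two vertices whose initial values we are free to perturb — contradicting the fact that $f$ depends on all of its arguments (for an extractive function) or on order-preserving perturbations (for a locally-sensitive function).

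The key steps, in order, are as follows. First, observe that the execution $\alpha$ of $a$ on $G$ with input $x$ is still a valid fair execution of $a$ on the graph $G'=(V,E\setminus\{e\})$ with the same input: no node ever uses the channel $e$, so no node can distinguish $G$ from $G'$, and each node produces the same output $f(x_1,\dots,x_n)$. Second, I exploit the structure of $f$. In the extractive case, pick the index $j$ with $f(x)=x_j$; without loss of generality the silent edge is incident to a vertex we can choose, and by a counting/pigeonhole argument over the (finitely many, since $R$ is finite) messages we can arrange — by choosing $x$ appropriately — that the silent edge is incident to the vertex realizing the output or to a vertex whose value we may change. Changing the initial value of the endpoint of $e$ that lies ``below'' the cut does not alter the execution on $G'$ (still no message crosses $e$, and $G'$ is the relevant topology), hence does not alter any node's output, yet by the defining property of consensus functions (resp. locally-sensitive consensus functions, using an order-preserving perturbation $x_{\sigma(i)}^{(1)}\to x_{\sigma(i)}^{(2)}$ as in the definition) the value $f$ must change. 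This is the contradiction. Third, I note the bound is over the class $\mathcal G_d$: since it holds for the single graph $K_n\in\mathcal G_d$ and the supremum over $\mathcal G$ only makes things larger, $\mathrm{MC}(f,\mathcal G_d)\in\Omega(|E|)$, and since on $K_n$ we have $|E|=\Theta(n^2)$ this matches the entry in Table~\ref{tab:consbounds}.

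The main obstacle is the second step: a single silent edge on a fixed input $x$ need not touch a vertex whose value actually matters for that particular $x$, so one cannot immediately perturb it. The fix is to run the argument more carefully — either (i) iterate: repeatedly delete silent edges and re-run, showing that if every edge is ``essential'' then $\Omega(|E|)$ messages are forced; or (ii) choose the input adversarially first. Concretely, for a locally-sensitive $f$ and any permutation $\sigma$, the definition hands us an ordered input $x$ and, for the specific index $i$ whose endpoint is the silent edge's lower end, two order-preserving alternatives $x_{\sigma(i)}^{(1)}\ne x_{\sigma(i)}^{(2)}$ that change $f$; the delicate point is guaranteeing that the execution (hence the set of silent edges) is itself insensitive to swapping $x_{\sigma(i)}^{(1)}$ for $x_{\sigma(i)}^{(2)}$ — this holds because the algorithm cannot transmit any information about that endpoint across the only edge, $e$, that was available to do so. Making this indistinguishability argument airtight (that the two executions are step-for-step identical at every node other than possibly the silent endpoint, and that the silent endpoint's output is likewise unchanged because it too never receives anything across $e$ distinguishing the two inputs) is where the real work lies; the counting of edges itself is trivial.
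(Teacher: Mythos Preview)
Your proposal has a genuine gap, and the paper takes a different route.

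The central problem is your indistinguishability step. You find a silent edge $e=(u,v)$ in $K_n$, delete it to get $G'$, and then want to perturb $x_v$ while claiming the execution is unchanged. But in $K_n$ (or $K_n\setminus\{e\}$) the vertex $v$ still has $n-2$ other incident edges. The messages $v$ sends over those edges may depend on $x_v$, and those messages propagate to every other node, including $u$. So changing $x_v$ can change the entire execution, the set of silent edges, and every node's output. Your justification that ``the algorithm cannot transmit any information about that endpoint across the only edge, $e$, that was available to do so'' is simply false: $e$ is not the only edge out of $v$. A single silent edge in a dense graph isolates nothing. Your fallback (i), iterated deletion of silent edges, at best shows that the messages used contain a spanning subgraph, which gives only an $\Omega(n)$ bound, not $\Omega(|E|)$.

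The paper's proof does not attempt a direct argument on a fixed graph. It proceeds in two steps: first, it observes (exactly as your iterated-deletion idea would show) that any correct consensus algorithm must send at least one message along every edge of some spanning subgraph, since otherwise the node set splits into two parts with no information flow between them. Second, and this is the nontrivial part, it invokes a known lower bound of Korach, Moran and Zaks: \emph{any} distributed task that requires the use of a spanning subgraph has worst-case message complexity $\Omega(|E|)$ on a class of almost-complete graphs. That result relies on an adversary that adaptively fixes the graph (within the almost-complete family) against the algorithm's probes; it is not something you can recover by staring at a single fixed $K_n$. In short, the missing idea in your proposal is precisely this external lower bound, and your direct argument cannot substitute for it.
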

\begin{proof}
Computation of any consensus function $f$ requires that \emph{at least} one message is sent along \emph{every} edge of a spanning subgraph of the network.  If this were not true, there would exist two subsets of the nodes $V_1\subset V$ and $V_2=V\setminus V_1$ s.t. no messages are exchanged between nodes in $V_1$ and in $V_2$. Then, nodes in $V_1$ would have no information about the initial values of nodes in $V_2$ and vice versa. Since $f$ depends on all initial values, this leads to a contradiction. Now, it can be shown that \emph{any} computation problem that requires use of a \emph{spanning subgraph} (i.e., at least one message sent along each of its edges) may require $|E|-1$ messages (and therefore $\Omega(|E|)$ messages) on a certain class of \emph{almost complete} graphs \cite{EK-SM-SZ:84}. This concludes the proof.
\end{proof}

We now turn our attention to a lower bound that becomes tight for sparse graphs. We first consider comparison-based algorithms; we will relax this assumption in Proposition \ref{lemma:ccring_noncomp}.

\begin{lemma}[Lower bound on message complexity for sparse graphs and comparison-based algorithms]
\label{lemma:ccring_comp}
Let $f$ be a locally-sensitive or extractive consensus function.
Let $\mathcal A_c$ be the set of comparison-based algorithms that solve the distributed consensus problem and assume that one minimizes message complexity over the set $\mathcal A_c$ (we denote the result of such minimization as $\textrm{MC}(f, \mathcal G_s)_{|\mathcal A = \mathcal A_c}$). Then  $\textrm{MC}(f, \mathcal G_s)_{|\mathcal A = \mathcal A_c} \in \Omega(n\log n)$.
\end{lemma}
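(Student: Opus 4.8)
The statement is exactly the classical $\Omega(n\log n)$ message lower bound for leader election / comparison-based distributed computation on a ring, adapted to our notion of consensus function. So the plan is to reduce to the ring case and invoke (a version of) the Frederickson–Lynch / Burns argument. The key idea is: (1) restrict the class of graphs $\mathcal G_s$ to a single hard instance, namely the ring $C_n$ on $n$ nodes — this is legitimate because the supremum over $G \in \mathcal G_s$ in the definition of $\textrm{MC}$ means a lower bound on any subfamily is a lower bound overall, and the ring is sparse ($|E| = n < n\log n$); (2) show that on the ring, a comparison-based algorithm solving our consensus problem must in particular let \emph{every} node learn enough order information that its behavior is forced to distinguish many ``order types'' of inputs; (3) quote the combinatorial core: on a ring, any comparison-based algorithm in which every processor's output depends on its rank among the inputs (which is forced by local-sensitivity or extractivity, via the order-equivalence and perturbation properties in the definitions) sends $\Omega(n\log n)$ messages in the worst case.

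\textbf{Key steps in order.} First I would fix the ring $C_n$ and recall that comparison-based algorithms (Definition \ref{def:compbased}) behave identically on order-equivalent inputs: if $A$ and $B$ are order-equivalent then every node follows the same sequence of transitions and emits the same messages. Second, I would argue that for a locally-sensitive or extractive $f$, the output at a node genuinely varies as we vary the input at that node within an order-preserving perturbation — this is precisely what properties (1)–(3) in the definition of locally-sensitive (resp. the $f(x)=x_j$ form for extractive) give us, so the algorithm cannot be ``trivial'' at any node. Third — the heart — I would invoke the standard argument (Frederickson–Lynch \cite{GF-NL:87}, also attributed to Burns; see \cite{NL:96}) that builds, for appropriate $n$, a ring labeling whose every contiguous segment of length $k$ appears ``often'' as a sub-pattern, forcing many nodes to remain ``active'' (send messages) for $\Theta(\log n)$ phases; summing, $\Omega(n\log n)$ messages are sent before any node can legitimately terminate with the correct value. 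The role of our consensus function is only to guarantee that \emph{premature} termination is incorrect: if a node stopped before acquiring $\Omega(\log n)$ bits of order information about the ring, one could exhibit an order-equivalent (to that node's view) input on which $f$ takes a different value, contradicting correctness.

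\textbf{Main obstacle.} The genuinely delicate point is bridging from ``comparison-based leader election needs $\Omega(n\log n)$ messages'' (the cited classical result, stated for a specific problem) to ``comparison-based computation of \emph{our} consensus function needs $\Omega(n\log n)$ messages.'' The classical proof is tailored to electing the maximum; I need the consensus-function axioms (dependence on all arguments, plus local-sensitivity/extractivity and the order-equivalence machinery) to supply the same ``indistinguishability forces continued communication'' leverage. Concretely, the obstacle is showing that a node's correct output on the ring requires it to disambiguate among $\Omega(\log n)$ order-bits' worth of distinct ring inputs — i.e. that the consensus function is not constant on any coarse order-equivalence class of the node's partial view. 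I expect this to follow by combining Proposition \ref{prop:lbtime}-style information-flow reasoning (nothing reaches a node faster than its distance) with the perturbation properties, but it is the step requiring care rather than a direct citation. The remaining bookkeeping — that the ring is in $\mathcal G_s$, that synchronous is a special case of asynchronous, that the $\inf$ over $\mathcal A_c$ is respected — is routine.
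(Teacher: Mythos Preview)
Your plan is the same as the paper's at the architectural level: restrict to the ring (sparse, worst case), pass to synchronous executions, and invoke the Frederickson--Lynch $c$-symmetric ring machinery \cite{GF-NL:87}. The paper then splits cleanly into two cases. For \emph{extractive} $f$ it is a one-line reduction: an algorithm that outputs $f(x)=x_j$ on distinct inputs also elects $j$ as leader, so the FL $\Omega(n\log n)$ leader-election bound applies verbatim. For \emph{locally-sensitive} $f$ the paper uses a perturbation argument. You correctly flag this second case as the delicate one, but your sketch of it is slightly off in a way that matters.

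The issue is your phrase ``exhibit an order-equivalent (to that node's view) input on which $f$ takes a different value.'' Order-equivalence of views is \emph{not} enough for a contradiction: a comparison-based algorithm makes the same \emph{decisions} on order-equivalent views, but its \emph{output value} (a number, not a decision) can legitimately differ---think of ``output the maximum value I have seen.'' The paper's argument instead produces a node $v$ whose view is \emph{literally identical} across two inputs $x^{(1)},x^{(2)}$ with $f(x^{(1)})\neq f(x^{(2)})$. The mechanism is: FL guarantees that on a $c$-symmetric ring, any comparison-based algorithm exchanging $o(n\log n)$ messages leaves every node with information only from a $k$-neighborhood with $2k+1<n$; hence there exists a node $w$ outside $v$'s reach; now use local sensitivity to perturb $x_w$ in an \emph{order-preserving} way that changes $f$. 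Since $w$ lies outside $v$'s $k$-neighborhood, $v$'s entire state (values included) is unchanged, so $v$ outputs the same---now wrong---value. Your ``$\Omega(\log n)$ bits of order information'' framing is also a red herring: the $\log n$ in the bound does not come from per-node information content but from the number of ``active'' rounds forced by the $c$-symmetry, with $\Theta(n)$ messages sent per round because many segments are order-equivalent and hence behave identically. Once you replace ``order-equivalent view'' by ``identical view via out-of-reach perturbation,'' your proof goes through and matches the paper's.
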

\begin{proof}
Consider the restriction  to synchronous executions  (since synchronous executions are a special case of asynchronous executions, a lower bound with respect to synchronous executions translates into a bound for the more general  asynchronous case). 

The proof  is inspired by  \cite{GF-NL:87} and relies on the notion of $c$-symmetric rings. Consider a graph with a ring topology (i.e., the $n$ nodes are lined up along a circle). A segment $S$ on the ring is a sequence of consecutive nodes in the ring, in clockwise order. Two segments of the same length are said order-equivalent if the ordered vector of initial conditions of their respective nodes are order-equivalent. Let $c$ be a positive constant. A ring is $c$-symmetric if, for every $l \in \naturals$ such that $\sqrt{n}\leq l \leq n$, and for every segment $S$ of length $l$, there are at least $\lfloor cn/l \rfloor$ segments in the ring that are order-equivalent to $S$ (including $S$ itself).  An example is shown in Fig. \ref{fig:csymmetric}.

We now study the message complexity of comparison-based algorithms on $c$-symmetric rings. To this purpose, without loss of generality\footnote{Any  consensus algorithm on a ring can be simulated by an algorithm in this class, since we assume no bound on the nodes' computational power and no limitations on their internal transitions; therefore any lower bound on this class of algorithms applies to \emph{all} consensus algorithms on rings. },  we consider comparison-based algorithms where at each synchronous time step (recall that we are considering \emph{synchronous} executions)
a node decides whether to send a message to its right  neighbor, whether to send a message to its left  neighbor, and whether to stop execution and decide on a consensus value. Every received message is stored in the receiver node's state. Every sent message contains the sender's entire state. Nodes can perform arbitrary internal transitions and have unlimited computational power. At each time step, the state of a node contains its initial value, its UID, and the history of messages exchanged with the neighbors. The initial conditions \emph{include the nodes' UIDs}, endowed with a total ordering.

It can be shown that (see \cite{GF-NL:87}) (i) for any $n$, there exists a set of initial conditions such that there exists a $c$-symmetric ring  for some $ c>0$, (ii) if a comparison-based algorithm exchanges $o (n\log n)$ messages on a $c$-symmetric ring, then every node receives information from nodes within distance $k$ where $k<n/2$ (hence, from a \emph{subset} of all nodes), and (iii) if a comparison-based algorithm exchanges $o(n\log n)$ messages on a $c$-symmetric ring, then every node $i$'s state is order-equivalent to another node $j$'s state at the end of the execution. More precisely, the $k$-neighborhoods of agents $i$ and $j$ (defined as the set containing the node and the $2k$ neighbors closest to it) contain agents with UIDs in identical order: such neighborhoods (shown in the example in Fig. \ref{fig:csymmetric}) can not be distinguished by a comparison-based algorithm.
\begin{figure}[h]
\centering
\includegraphics[width=.34\textwidth]{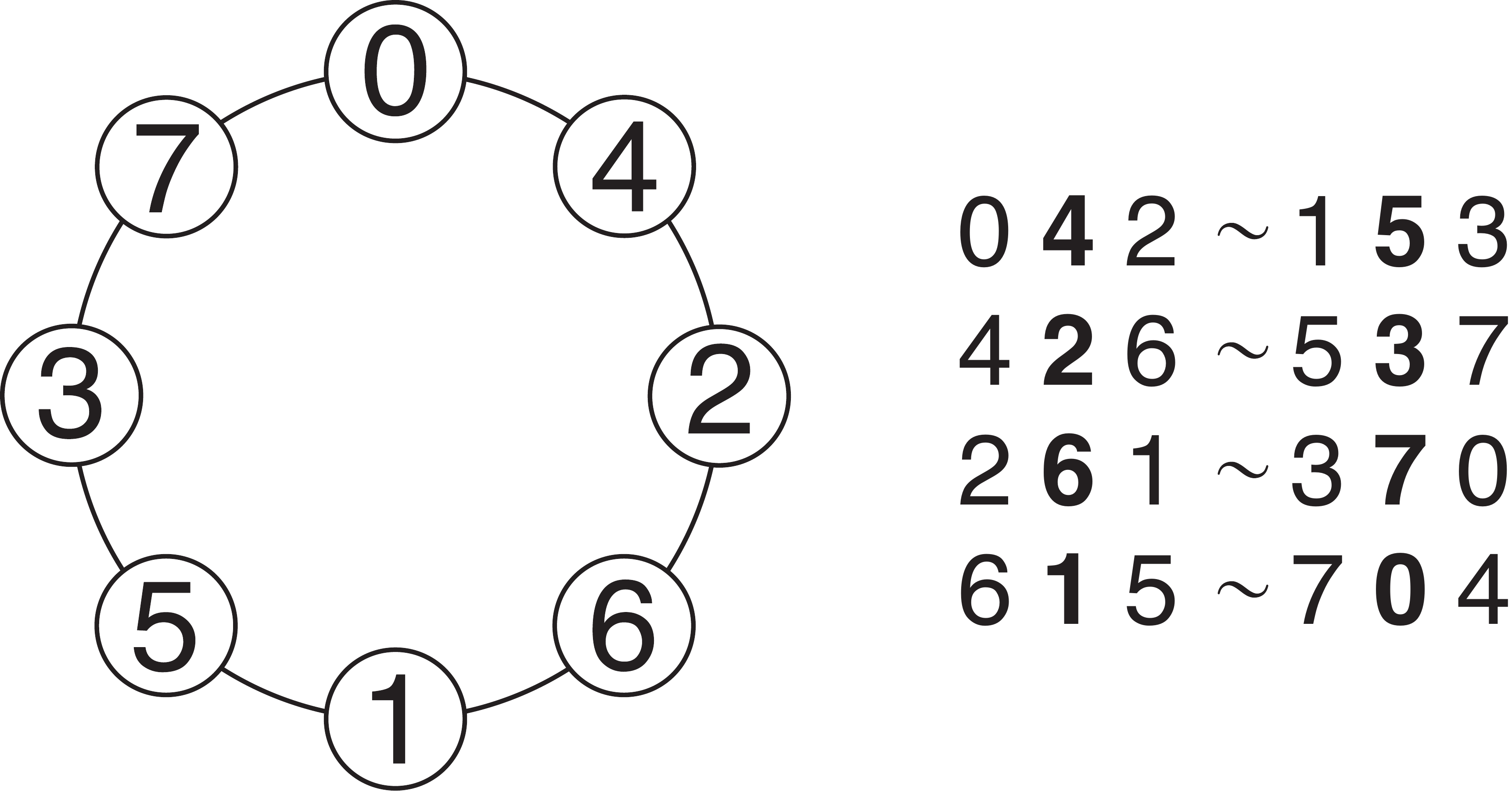}
\caption{\small A 1/2-symmetric ring. Each segment of length $l\leq4$ (and, in particular, $l=3$) is order-equivalent to another segment. Thus, each agent's 2-neighborhood is order-equivalent to another agent's.\vspace{-0.35cm}}
\label{fig:csymmetric}
\end{figure}
This implies that the leader election problem has a message complexity $\Omega(n\log n)$  \cite{GF-NL:87}, since after $o (n\log n)$ message at least two agents are in states that are indistinguishable by a comparison-based  algorithm.

We now apply these results to  distributed consensus problems with locally-sensitive or extractive consensus functions. The lower bound $\Omega(n\log n)$ on message complexity   directly applies to extractive consensus functions: any distributed consensus algorithm capable of extracting the initial value of a node with $o(n\log n)$ messages would also solve the leader election problem with $o(n\log n)$ messages, which would contradict the aforementioned result for leader election. 

Consider, now, locally-sensitive consensus functions. We proceed by contradiction, assuming that there exists a comparison-based algorithm that  solves the  consensus problem with a  locally-sensitive consensus function with message complexity  $o(n\log n)$.  Consider a given set of initial conditions, $x$, and let the nodes compute the value of the consensus function. Then, consider a set of initial conditions $x^{\prime}$ identical to $x$ except that one of the node's initial values (and therefore the overall consensus value) is perturbed  without changing the overall ordering of UIDs and node values (this is possible by the assumption of locally-sensitive consensus function). We next show that  after $o(n\log n)$ messages at least one node would output the same consensus value it computed for initial condition $x$ -- a contradiction.

Let the nodes be arranged in a ring and let $x^{(1)}$ be an initial condition such that the ring is $c$-symmetric. By the contradiction hypothesis, after $o(n\log n)$ messages are exchanged, every node's state is order-equivalent to at least one other node's state and each node correctly outputs $f(x^{(1)})$ - call this execution $\alpha^{(1)}$. Note that, by fact (ii) above,  each node has only received information from $2k+1<n$ neighbors, including itself. Consider a pair $u$, $v$ of nodes in order-equivalent states: there exists one node $w$ that belongs to the $k$-neighborhood of $u$ but does not belong to the $k$-neighborhood of $v$. Consider now an initial condition $x^{(2)}$ identical to $x^{(1)}$ except that  the initial value of $w$ is perturbed without changing the overall order of agents' values and UIDs, and such that $f(x^{(1)})\neq f(x^{(2)})$ (this is always possible when the consensus function is locally-sensitive). Given initial condition $x^{(2)}$, under any execution $\alpha^{(2)}$ with $o(n\log n)$ messages, $v$'s state will be identical (and not just order-equivalent) to the state under execution $\alpha^{(1)}$, and therefore $v$ will output  $f(x^{(1)})$ as its consensus value (since by the contradiction hypothesis the algorithm terminates after $o(n\log n)$ messages). This is a contradiction.
 \end{proof}

The bound in Lemma \ref{lemma:ccring_comp} is instrumental to derive the desired lower bound over the much more general class of distributed consensus algorithms. Such bound requires that the set of initial values, i.e., $\mathcal X$ is ``large enough". This requirement is automatically satisfied whenever $\mathcal X$ has infinite cardinality.

\begin{proposition}[Lower bound on message complexity for sparse graphs]
\label{lemma:ccring_noncomp}
Let $f$ be a locally-sensitive or extractive consensus function. Then there exists a function $\psi(n,R)$ such that, if the cardinality of $\mathcal X$ is greater than or equal to $\psi(n,R)$, then $\textrm{MC}(f, \mathcal G_s) \in \Omega(n\log n)$.
\end{proposition}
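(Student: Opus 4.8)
The plan is to transfer the comparison-based lower bound of Lemma~\ref{lemma:ccring_comp} to arbitrary (not necessarily comparison-based) algorithms via a Ramsey-theoretic argument in the spirit of \cite{GF-NL:87}: on a sufficiently large totally ordered domain, any algorithm that halts within a bounded number of rounds behaves, on a suitable sub-domain, exactly like a comparison-based one. As in Lemma~\ref{lemma:ccring_comp}, it suffices to treat synchronous executions on the ring $C_n$ (which is sparse, since $n<n\log n$, so $C_n\in\mathcal G_s$ and a lower bound on $C_n$ lower-bounds $\textrm{MC}(f,\mathcal G_s)$). So I would fix an algorithm $a$ that solves the consensus problem and halts within $R$ rounds, and aim to show that if $|\mathcal X|$ is at least some $\psi(n,R)$, then $a$ exchanges $\Omega(n\log n)$ messages in the worst case over initial conditions (values and UIDs) drawn from $\mathcal X$; taking the infimum over all algorithms $a$ then yields the claim, and since every algorithm has \emph{some} finite round bound $R$, an infinite $\mathcal X$ satisfies the hypothesis for all of them at once.

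The core step is the choice of colouring. For each $n$-element subset $Y=\{y_1<\cdots<y_n\}\subseteq\mathcal X$ and each order type $\sigma\in S_n$ (a permutation of $\{1,\dots,n\}$), consider the synchronous execution of $a$ on the ring input assigning $y_j$ to node $\sigma(j)$, and record its \emph{communication pattern} $\Pi_\sigma(Y)$: which of the $2n$ directed ring edges carries a message in each of the $R$ rounds, the round at which each node halts, and (for extractive $f$) which input position equals each node's output. Colour $Y$ by the tuple $\chi(Y)=(\Pi_\sigma(Y))_{\sigma\in S_n}$. Because $a$ halts within $R$ rounds, each $\Pi_\sigma(Y)$ lives in a finite set of size at most $2^{2nR}(R+1)^n n^n$, so $\chi$ uses at most $C(n,R):=\bigl(2^{2nR}(R+1)^n n^n\bigr)^{n!}$ colours, a finite number depending only on $n$ and $R$. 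Let $q(n)=O(n)$ be the (finite) number of distinct initial values needed to carry out the argument in the proof of Lemma~\ref{lemma:ccring_comp} inside a sub-domain, namely to build a $c$-symmetric ring on $n$ nodes and then perform the two order-preserving perturbations used there. Define $\psi(n,R)$ to be the Ramsey number: the least $N$ such that every $C(n,R)$-colouring of the $n$-subsets of an $N$-element set contains a monochromatic subset of size $q(n)$.

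Now suppose $|\mathcal X|\ge\psi(n,R)$. Applying Ramsey's theorem to $\chi$ produces a set $\mathcal X'\subseteq\mathcal X$, $|\mathcal X'|=q(n)$, on which $\chi$ is constant: for inputs with values and UIDs drawn from $\mathcal X'$, the communication pattern of $a$ depends only on the order type of the input. Consequently, for the purposes of counting messages, $a$ restricted to $\mathcal X'$-inputs is order-invariant and indistinguishable from a comparison-based algorithm in the sense of Definition~\ref{def:compbased} — in particular, two order-equivalent $\mathcal X'$-inputs induce the same causal-dependency structure among the nodes, and a node whose causal past excludes a given node has an identical state under the two inputs. These are exactly the properties the proof of Lemma~\ref{lemma:ccring_comp} exploits: arrange $\mathcal X'$-values and UIDs into a $c$-symmetric ring giving an input $x^{(1)}$; if $a$ used $o(n\log n)$ messages, fact (ii) of that proof bounds every node's causal past below $n$, so there are a pair $u,v$ of nodes and a node $w$ such that $w$ lies in the causal past of $u$ but not in that of $v$; perturbing $w$'s value within $\mathcal X'$ (preserving all orders, possible because $f$ is locally-sensitive and $\mathcal X'$ has enough room) to an input $x^{(2)}$ with $f(x^{(1)})\ne f(x^{(2)})$ leaves the communication pattern — hence $v$'s causal past and $v$'s state — unchanged, so $v$ would output $f(x^{(1)})\ne f(x^{(2)})$, a contradiction; the extractive case follows as in Lemma~\ref{lemma:ccring_comp} via the leader-election reduction applied to the order-invariant restriction. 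Hence $a$ needs $\Omega(n\log n)$ messages on some $\mathcal X'$-input, with the hidden constant being the uniform one from Lemma~\ref{lemma:ccring_comp}, independent of $a$; the proposition follows.

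I expect the main obstacle to be twofold. First, the finiteness of the colouring: this is precisely where the model's bounded-round assumption is indispensable and is the reason $\psi$ must depend on $R$. Second, the verification that a monochromatic sub-domain confers on $a$ exactly the features of comparison-based algorithms used inside the proof of Lemma~\ref{lemma:ccring_comp}, i.e., that matching message and halting patterns on order-equivalent inputs propagate — by determinism and induction on the round index — to matching causal structure and, where needed, identical node states. A lesser but genuine point is checking that $q(n)=O(n)$ distinct values suffice both for the $c$-symmetric ring construction of \cite{GF-NL:87} and for the order-preserving perturbations demanded by local sensitivity, so that the entire argument can be replayed within $\mathcal X'$.
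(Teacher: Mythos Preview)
Your overall strategy---Ramsey reduction from arbitrary to comparison-based algorithms, then invoke Lemma~\ref{lemma:ccring_comp}---is exactly the paper's, and your identification of the two obstacles is on point. But the second obstacle is not fully discharged, and it hides the one technical idea the paper supplies that your sketch does not. Your colouring (communication pattern for each permutation $\sigma$) together with a monochromatic $\mathcal X'$ of size $q(n)$ only guarantees \emph{full-input} order-invariance: two $\mathcal X'$-valued ring inputs with the same global order type produce the same message pattern. That is strictly weaker than comparison-basedness in the sense of Definition~\ref{def:compbased}, which requires that a node's transitions depend only on the order of the values \emph{locally known to it}. Fact~(ii) from \cite{GF-NL:87}---the statement you invoke to bound every node's causal past below $n$ after $o(n\log n)$ messages---is proved precisely by exploiting that, on a $c$-symmetric ring, order-equivalent \emph{segments} behave identically under a comparison-based algorithm. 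Full-input order-invariance does not give you this segment-level symmetry: within a single input, two order-equivalent $k$-neighbourhoods sit in different global contexts, and nothing in your colouring forces them to fire the same messages. So your step ``fact~(ii) bounds every node's causal past'' is not justified.

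The paper closes this gap with a padding trick: it asks Ramsey for a monochromatic set $\mathcal W$ of size $2n-1$ (not $q(n)$), colours by full-trace indistinguishability (not just communication pattern), and takes $\bar{\mathcal U}$ to be the $n$ smallest elements of $\mathcal W$. Then any two order-equivalent $m$-subsets $\bar{\mathcal U}_1,\bar{\mathcal U}_2\subset\bar{\mathcal U}$ can be extended by the \emph{same} $n-m$ elements of $\mathcal W\setminus\bar{\mathcal U}$ to two $n$-subsets of $\mathcal W$; these are indistinguishable, and since the appended elements are all larger, the original $m$-subsets occupy the same positions in both---so the local behaviours on $\bar{\mathcal U}_1$ and $\bar{\mathcal U}_2$ coincide. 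This is what certifies that $a$ on $\bar{\mathcal U}$-valued inputs is genuinely comparison-based, after which one can construct a comparison-based algorithm emulating $a$ there and apply Lemma~\ref{lemma:ccring_comp} wholesale. Your $q(n)=O(n)$ is the right order, but the specific choice $2n-1$ and the padding device are the missing ingredients.
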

\begin{proof}
The key idea (conceptually identical to that in \cite{GF-NL:87} and \cite{MS:85}) is to  show that, perhaps surprisingly,  if the cardinality of $\mathcal X$  is larger than a (very large) finite number, any distributed consensus algorithm in $\mathcal A$  executes on a small subset of $\mathcal X$ in a way that it is indistinguishable from the execution of a comparison-based algorithm. Lemma \ref{lemma:ccring_comp} then applies and the claim follows.

As in the proof of Lemma \ref{lemma:ccring_comp}, without loss of generality, we consider synchronous executions and distributed consensus algorithms  (referred to as \emph{elementary} algorithm) where at each time step a node decides whether to send a message to its right  neighbor, whether to send a message to its left  neighbor, and whether to stop execution and decide on a consensus value. Every received message is stored in the receiver node's state. Every sent message contains the sender's entire state. Nodes can perform arbitrary internal transitions and have unlimited computational power. At each time step, the state of a node contains its initial value, its UID, and the history of messages exchanged with the neighbors. The initial conditions \emph{include the nodes' UIDs} (totally ordered according to some binary relation $\leq$). 

We introduce the definition of \emph{indistinguishable} initial values.
Consider two sets of initial conditions $x^{(1)}$ and $x^{(2)}$, whose elements are arranged in increasing order, that is $x^{(1)}_i \leq x^{(1)}_j $ and $x^{(2)}_i \leq x^{(2)}_j $ for $i\leq j$. Let $\sigma$ be a permutation over the set of indexes $\mathcal I=\{1,\ldots, n\}$. We say that  $x^{(1)}$ and $x^{(2)}$ are \emph{indistinguishable} with respect to an algorithm $a\in \mathcal{A}$ if, for \emph{any} permutation $\sigma$, the trace\footnote{Informally, the trace is the history of an execution the algorithm: for each time step, it records all agents' states  and all messages exchanged. For a formal definition, we refer the reader to \cite{NL:96}.} of the execution with initial values $x^{(2)}$ (with indices permuted according to $\sigma$) can be obtained from the trace of the algorithm with initial values $x^{(1)}$ (with indices permuted according to $\sigma$) merely by substituting every occurrence of an element of $x^{(2)}$ with the element of corresponding order in $x^{(1)}$. 

We claim that, if the set $\mathcal X$ of possible initial values is large enough, there exists a set $\mathcal W \subseteq \mathcal X$ with $|\mathcal W|\geq 2n-1$ such that \emph{any} two $n$-subsets $\mathcal U \subset \mathcal W$ of size $|\mathcal U|=n$ are indistinguishable with respect to a given consensus algorithm $a\in \mathcal A$. This claim follows from Ramsey's theorem \cite[Theorem B]{FPR:30}. 
Specifically, color every subset  $\mathcal U\subset \mathcal X$ of size $|\mathcal U|=n$ so that indistinguishable sets  share the same color. For any elementary algorithm, there are finitely many set of indistinguishable initial conditions. This is due to the facts that (i) there are finitely many permutations $\sigma$ (specifically, $n!$), (ii) at each time step each node can make a finite number of decisions (specifically, 8), and (iii) the algorithm must terminate within a number of rounds equal to  $R$. Then, by Ramsey's theorem, there exists a number $\phi(n,R)$ such that, if $|\mathcal X|\geq \phi(n,R)$, then there exists at least one set $\mathcal W \subset \mathcal X$ of size $|\mathcal W|=2n-1$ such that all its $n$-subsets $\mathcal U\subset \mathcal W$, $|\mathcal U|=n$, share the same color. Note that $\phi(n,R)$ does not depend on the specific algorithm under consideration.

We next claim that there is a set $\bar {\mathcal U} \subset \mathcal W$ of size $|\bar {\mathcal U}|=n$ such that any elementary algorithm  behaves like a comparison-based algorithm (i.e., one can find a comparison-based algorithm yielding identical executions) on initial conditions taken from $\bar {\mathcal U}$. Specifically, take $\bar {\mathcal U}$ as the set of the $n$ lowest values in $\mathcal W$. We claim that \emph{any} two order-equivalent sets of size $m\leq n$ in $\bar {\mathcal U}$ (corresponding to $m$-neighborhoods of agents executing the distributed algorithm) yield identical executions, thus implying that the algorithm effectively emulates a comparison-based algorithms on initial values from $\bar {\mathcal U}$. To prove the claim, consider two sets $\bar {\mathcal U}_1$, $\bar {\mathcal U}_2\in \bar {\mathcal U}$ of size $m$. Now append to $\bar {\mathcal U}_1$ and $\bar {\mathcal U}_2$ the same $n-m$ elements of $\mathcal W\setminus \mathcal U$. The two  resulting $n$-sets belong to $\mathcal W$ and elements of $\bar {\mathcal U}_1$ and $\bar {\mathcal U}_2$ appear in the same position in both $n$-sets: therefore they are indistinguishable and, in particular, whenever the states of two nodes $u$ and $v$ contain sets $\bar {\mathcal U}_1$ and $\bar {\mathcal U}_2$, respectively, such nodes will output the same value for the consensus function.

The proof is then completed by using executions of a (non-comparison-based) elementary algorithm on $\bar {\mathcal U}$ to ``construct" a comparison-based algorithm. Specifically, we construct a   comparison-based algorithm whose transitions are identical to the transitions of a given elementary algorithm on $\bar{\mathcal U}$, based on the order of the elements in $\bar{\mathcal U}$. Since no comparison-based algorithm can solve the consensus problem with  $o(n\log n)$ messages, the claim follows.
\end{proof}

We stress the fact that the assumptions of Proposition \ref{lemma:ccring_noncomp} are satisfied  whenever the set of initial conditions has infinite cardinality (e.g., $\mathcal X = \reals$ or $\mathcal X = \naturals$).

\subsection{Byte complexity}
\label{ssec:bc}
To prove bounds on byte complexity, we must be a little bit more specific about the content of the messages. In particular, we will assume that messages carry the UID of the sender and/or of the receiver. In practice, virtually all wireless communication protocols require each message to carry a UID identifying the sender. In addition, in non-broadcast communication protocols (as those considered in this section), messages also need to carry a receiver UID.  The proof of Propositions are omitted: they follow quite easily from Propositions \ref{lemma:ccfull} and \ref{lemma:ccring_noncomp} (and the fact that a UID requires $\log n$ bytes to be transmitted). 

 \begin{proposition}[Lower bound on byte complexity for sparse networks]
 \label{lemma:bcring}
Assume messages carry the sender and/or the receiver UIDs. Let $f$ be a hierarchically computable and either locally-sensitive or extractive consensus function. There exists a function $\psi(n,R)$ such that, if the cardinality of $\mathcal X$ is greater than or equal to $\psi(n,R)$, then  $\textrm{BC}(f, \mathcal G_s) \in \Omega\Bigl ((n\log n)\log n+n\, b \Bigr)$, where $b$ is the size (in bytes) of an initial condition in $\mathcal X$.
 \end{proposition}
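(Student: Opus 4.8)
The plan is to establish two lower bounds separately --- a ``UID cost'' $\textrm{BC}(f,\mathcal G_s)\in\Omega\bigl((n\log n)\log n\bigr)$ and a ``payload cost'' $\textrm{BC}(f,\mathcal G_s)\in\Omega(nb)$ --- and then combine them. Each is witnessed by a concrete sparse graph: a $c$-symmetric ring for the first and the line graph $1-2-\cdots-n$ for the second, both of which have fewer than $n\log n$ edges and hence lie in $\mathcal G_s$. Since for any fixed algorithm $a$ the worst-case byte count over $\mathcal G_s$ dominates its byte count on either of these two graphs, and $\max(A,B)\ge(A+B)/2$ while $\inf_a(A_a+B_a)\ge\inf_aA_a+\inf_aB_a$, one obtains $\textrm{BC}(f,\mathcal G_s)\ge\tfrac12\bigl(\Omega((n\log n)\log n)+\Omega(nb)\bigr)=\Omega\bigl((n\log n)\log n+nb\bigr)$, which is the claim.

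The UID cost follows immediately from Proposition~\ref{lemma:ccring_noncomp}: under the hypothesis $|\mathcal X|\ge\psi(n,R)$, every algorithm exchanges $\Omega(n\log n)$ messages on some $c$-symmetric ring, and by assumption each message carries at least one of the sender/receiver UIDs; a UID ranges over a set of size $n$ and therefore occupies $\log n$ bytes, so the total traffic on that ring is at least $\Omega(n\log n)\cdot\log n$, uniformly over algorithms.

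For the payload cost I would run a cut (bottleneck) argument on the line graph. Fix $i\in\{1,\dots,n-1\}$; deleting the single edge $(i,i+1)$ separates $L_i:=\{1,\dots,i\}$ from the rest, so the only channel through which nodes $i+1,\dots,n$ receive information about $x_1,\dots,x_i$ is the stream of messages sent rightward across that edge. Because $f$ is hierarchically computable, $f(x)=\bigl(x_1\ast\cdots\ast x_i\bigr)\ast\bigl(x_{i+1}\ast\cdots\ast x_n\bigr)$, so for every node on the right to be able to output $f(x)$ this rightward stream must at least determine the partial aggregate $x_1\ast\cdots\ast x_i$; for an extractive $f$ the aggregate is (the value of) one of the initial conditions, an element of $\mathcal X$ of byte size $b$, and for a locally-sensitive $f$ one argues analogously via order-preserving perturbations of the left block. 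A fooling-set/counting argument over a family of inputs in which the left block ranges over a sufficiently rich order-equivalent subfamily of $\mathcal X^{i}$ --- whose existence is guaranteed by the same largeness hypothesis on $\mathcal X$ as in Proposition~\ref{lemma:ccring_noncomp}, which also prevents the algorithm from exploiting numerical structure of $\mathcal X$ to compress the aggregate --- then shows that across edge $(i,i+1)$ the algorithm must transmit $\Omega(b)$ bytes of non-UID content in the worst case. Since the bottleneck edges for distinct $i$ are distinct, these contributions do not overlap, and summing over $i\in\{1,\dots,n-1\}$ yields $\Omega(nb)$ on the line graph.

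The main obstacle is making the per-cut ``$\Omega(b)$ bytes'' estimate precise and, crucially, uniform over a \emph{single} input: one must exhibit an input family of size $2^{\Omega(b)}$, argue that the cut-crossing transcript is injective on it (so the right side can reconstruct the correct partial aggregate), and then use a Kraft-type averaging argument across the family to conclude that some single input simultaneously forces $\Omega(b)$ bytes across each of the $\Theta(n)$ bottlenecks. This is exactly where the largeness of $\mathcal X$ and the hierarchical-computability hypothesis --- which supplies a well-defined sufficient statistic of size comparable to $b$ --- do the real work. The remaining bookkeeping, namely that the UID cost and the payload cost are genuinely additive rather than double-counted, is handled by the $\max\ge$ average reduction above, since the two costs are extracted on two different graphs.
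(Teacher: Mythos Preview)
Your argument for the UID cost is exactly what the paper intends: the paper states only that the proof ``follows quite easily from Propositions~\ref{lemma:ccfull} and~\ref{lemma:ccring_noncomp} (and the fact that a UID requires $\log n$ bytes to be transmitted),'' and gives no further detail. Multiplying the $\Omega(n\log n)$ message lower bound from Proposition~\ref{lemma:ccring_noncomp} by the per-message $\log n$ UID overhead is the entirety of the paper's sketch.

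For the $\Omega(nb)$ term the paper offers no argument whatsoever. Your cut-based treatment on the line graph is therefore considerably more careful than anything the paper provides, and the obstacle you isolate --- that a per-cut fooling-set bound does not by itself produce a \emph{single} input forcing $\Omega(b)$ bytes across every cut simultaneously --- is a genuine subtlety that the paper simply does not address. Your proposed resolution via a Kraft/averaging argument over a rich input family is the standard communication-complexity device for this situation and would close the gap, though you stop short of carrying it out. Your two-graph decomposition (a $c$-symmetric ring for the UID term, the line for the payload term, combined via $\max\ge\tfrac12(\cdot+\cdot)$ and superadditivity of the infimum) is also sound and in fact necessary: the $\Omega(n\log n)$ message bound is proved specifically on $c$-symmetric rings and need not hold on the line, while single-edge cuts do not exist on the ring, so neither witness alone delivers both terms.
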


\begin{proposition}[Lower bound on byte complexity for dense networks]
\label{lemma:bcfull}
Assume messages carry the sender and/or the receiver UIDs.
Then  $\textrm{BC}(f, \mathcal G_d) \in \Omega\Bigl (|E|\log n+n\, b \Bigr)$, where $b$ is the size (in bytes) of an initial condition in $\mathcal X$.
\end{proposition}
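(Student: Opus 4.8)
The plan is to bound the byte complexity from below by each of the two summands \emph{separately} and then invoke $\max(A,B)\ge (A+B)/2$ for nonnegative $A,B$ to recover $\Omega(|E|\log n + n\,b)$. The first summand, $\Omega(|E|\log n)$, is the addressing overhead of the messages that Proposition~\ref{lemma:ccfull} already forces; the second, $\Omega(n\,b)$, is the unavoidable cost of transporting the $n$ initial values across the network.

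First I would handle the addressing overhead. By Proposition~\ref{lemma:ccfull} there is a family of dense (indeed almost complete) graphs on which every consensus algorithm must, in the worst case over initial conditions and fair executions, send at least $|E|-1\in\Omega(|E|)$ messages. In the directional model each such message carries the sender's and/or the receiver's UID, and a UID ranges over $\{1,\dots,n\}$, hence occupies $\Theta(\log n)$ bytes --- this is the only place the hypothesis on message contents enters. Multiplying gives $\textrm{BC}(f,\mathcal G_d)\in\Omega(|E|\log n)$ on exactly the worst-case instance furnished by Proposition~\ref{lemma:ccfull}.

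Next I would establish the payload bound by a cut argument. Fix a node $i$ and the cut separating $\{i\}$ from $V\setminus\{i\}$. Using that $f$ is extractive or locally-sensitive (and hierarchically computable), I would fix the other $n-1$ initial values so that $f$ separates $\Omega(|\mathcal X|)$ distinct values of $x_i$ --- for MAX, MIN and leader election one fixes the remaining entries so that $f$ literally returns $x_i$, and for (weighted) sums any fixed choice of the other entries already makes $f$ injective in $x_i$. Since every node other than $i$ must output this value, the messages crossing the cut \emph{out of} $i$ must jointly encode $x_i$, which requires at least $b$ bytes, the representation size of an element of $\mathcal X$; otherwise two initial conditions differing only in coordinate $i$ would generate the same out-of-$i$ byte stream (working, as in the proofs above, with a fixed synchronous schedule so that the stream is a deterministic function of the value of $x_i$), hence identical states at every node $\ne i$, hence the same output for two inputs on which $f$ disagrees. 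Because every message is outgoing from exactly one node --- its sender --- the message sets attached to the $n$ cuts are pairwise disjoint, so summing over $i$ gives $\textrm{BC}(f,\mathcal G_d)\in\Omega(n\,b)$. Taking the maximum of the two bounds completes the argument, together with the fact that a UID requires $\Theta(\log n)$ bytes.

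The hard part will be making the payload step rigorous, specifically the jump from ``some information about $x_i$ must leave node $i$'' to ``at least $b$ bytes must leave node $i$''. This is where the structural refinements of $f$ do the work: one must verify that extractivity, or local sensitivity together with hierarchical computability, really does guarantee that the remaining coordinates can be fixed so that $f$ separates $\Omega(|\mathcal X|)$ distinct values of a single coordinate (transparent for MAX, MIN, leader election and sums, and requiring only a short argument --- an appropriate choice of the other coordinates --- in general). Asynchrony is not an obstacle: as throughout the paper, synchronous executions are a special case and a lower bound over them is a lower bound in general. And the final combination is trivial precisely because we take the maximum of the two summands rather than trying to add bytes \emph{within} individual messages, which would additionally require the two worst-case instances to coincide.
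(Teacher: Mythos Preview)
Your proposal is correct and aligns with the paper's approach. The paper in fact omits the proof entirely, remarking only that it ``follows quite easily from Propositions~\ref{lemma:ccfull} and~\ref{lemma:ccring_noncomp} (and the fact that a UID requires $\log n$ bytes to be transmitted)''; your first summand is exactly this, and your cut argument for the $n\,b$ term is the natural way to fill in what the paper leaves implicit.

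One point worth flagging: the statement of Proposition~\ref{lemma:bcfull} as written does \emph{not} hypothesize that $f$ is locally-sensitive, extractive, or hierarchically computable, whereas you (correctly) invoke these properties to get $\Omega(b)$ bytes out of each node rather than merely $\Omega(1)$ bit. This appears to be an omission in the paper's statement---the companion Proposition~\ref{lemma:bcring} and the caption of Table~\ref{tab:consbounds} both impose these hypotheses on byte-complexity bounds---so your reading is the intended one. Your caveat that local sensitivity alone only guarantees two separated values (not $\Omega(|\mathcal X|)$) is well placed; for the concrete consensus functions the paper cares about (weighted sums, MAX, MIN, leader election) the injectivity-in-one-coordinate step is immediate, and hierarchical computability with a cancellative $\ast$ handles the general case, but the paper does not spell this out either.
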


\subsection{Lower bounds for local broadcast algorithms}
The lower bounds on message complexity in Sections \ref{sec:mc} and  \ref{ssec:bc} are derived under the assumption of directional communication (the lower bounds on time  complexity is, instead, general). This section adapts those bounds to the case of local broadcast communication. The proofs for Propositions \ref{prop:byte1} and \ref{prop:byte2} are omitted: they are a simple consequence of the fact that on ring topologies local broadcasts only offer a twofold improvement in message complexity. Note that, in this case, we do \emph{not} make a distinction between dense and sparse graphs.

\begin{proposition}[Lower bound on broadcast message complexity]\label{prop:byte1}
Let $f$ be a locally-sensitive or extractive consensus function and $\mathcal G$  be a set of graphs with $n$ nodes.There exists a function $\psi(n,R)$ such that, if the cardinality of $\mathcal X$ is greater than or equal to $\psi(n,R)$, then $\textrm{bMC}(f, \mathcal G) \in \Omega(n\log n)$.
\end{proposition}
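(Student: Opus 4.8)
The plan is to reduce the broadcast lower bound to the directional lower bound of Proposition \ref{lemma:ccring_noncomp}, exploiting the fact that on a ring every node has exactly two neighbors, so a single local broadcast can be emulated by at most two directional messages. Accordingly, I take $\psi(n,R)$ to be the very threshold function produced by Proposition \ref{lemma:ccring_noncomp} (equivalently, the Ramsey-theoretic bound $\phi(n,R)$ from its proof), and I specialize the analysis to the ring topology on $n$ nodes, which lies in $\mathcal G_s$ and is exactly the extremal instance used there; as in all the message-complexity proofs, I restrict attention to synchronous executions, since these are a special case of asynchronous ones and a lower bound for them carries over.

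Argue by contradiction: suppose $\textrm{bMC}(f,\mathcal G) \in o(n\log n)$, so there is a local-broadcast consensus algorithm $a$ whose worst-case number of broadcast messages on the ring $G$ is $o(n\log n)$. I construct a directional consensus algorithm $a'$ that runs the local logic of $a$ verbatim, except that whenever a node would perform a local broadcast of a message $m$, $a'$ instead sends $m$ individually to each of its (exactly two) neighbors in the same round. Then $a'$ performs no other communication, terminates within the same number $R$ of rounds, and outputs the same, correct, consensus value, so $a' \in \mathcal A$ solves the consensus problem; moreover each synchronous execution $\alpha'$ of $a'$ corresponds to a synchronous execution $\alpha$ of $a$ with $M(a', x, \alpha', G) = 2\, bM(a, x, \alpha, G)$. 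Taking suprema over $x$ and over executions and then the infimum over directional algorithms, the directional message complexity of $a'$ on the ring is at most $2\,\textrm{bMC}(f,\mathcal G) \in o(n\log n)$.

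This contradicts Proposition \ref{lemma:ccring_noncomp}: since $|\mathcal X| \ge \psi(n,R)$, every directional consensus algorithm — $a'$ in particular — must exchange $\Omega(n\log n)$ messages on sparse graphs, with the ring as witness. Hence no such $a$ exists and $\textrm{bMC}(f,\mathcal G) \in \Omega(n\log n)$. The companion Proposition \ref{prop:byte2} for broadcast byte complexity follows by the identical reduction, additionally charging each message the $\log n$ bytes needed to carry a sender/receiver UID, exactly as in the passage from Proposition \ref{lemma:ccring_noncomp} to Proposition \ref{lemma:bcring}.

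The content here is a bookkeeping check rather than a new idea, and the only points requiring care are: (i) verifying that the emulation $a \mapsto a'$ is faithful under asynchrony as well (so that $a'$ is genuinely in $\mathcal A$), which is immediate since the two emulated messages carry the fixed content dictated by $a$'s state and a node on a ring has degree exactly two, so no adversarial fair schedule can inflate the message count beyond the factor of two; and (ii) the (purely expository) caveat — implicit throughout this section — that $\mathcal G$ is understood to contain the extremal ring topology, without which the statement must be read as a bound on that instance.
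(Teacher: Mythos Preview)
Your argument is correct and matches the paper's approach exactly: the paper omits the proof but states it is ``a simple consequence of the fact that on ring topologies local broadcasts only offer a twofold improvement in message complexity,'' which is precisely the emulation-by-two-directional-messages reduction you carry out. Your care points (i) and (ii) are appropriate and make explicit what the paper leaves implicit.
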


\begin{proposition}[Lower bound on broadcast byte complexity]\label{prop:byte2}
Assume messages carry the sender and/or the receiver UIDs. Let $f$ be a locally-sensitive or extractive consensus function and $\mathcal G$  be a set of graphs with $n$ nodes. There exists a function $\psi(n,R)$ such that, if the cardinality of $\mathcal X$ is greater than or equal to $\psi(n,R)$, then   $\textrm{bBC}(f, \mathcal G) \in \Omega(n\log^2 n+nb)$.
\end{proposition}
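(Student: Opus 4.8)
The plan is to obtain the bound by a simple reduction to the directional byte complexity bound of Proposition~\ref{lemma:bcring}, exploiting the fact that on a ring local broadcast can be simulated by directional communication at only a constant-factor cost. Since the lower bound of Proposition~\ref{lemma:bcring} is itself established on $c$-symmetric \emph{rings} (which are sparse), it already asserts that \emph{every} directional consensus algorithm must use $\Omega\bigl((n\log n)\log n + n\,b\bigr)=\Omega(n\log^2 n + n\,b)$ bytes on some ring, provided $|\mathcal X|\geq\psi(n,R)$.

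First I would fix a ring $G$ on $n$ nodes and an arbitrary broadcast algorithm $a$ attaining the broadcast byte complexity $\textrm{bBC}(f,\{G\})$. Each broadcast message of size $s$ bytes, sent by a node to its (at most two) neighbors on the ring, can be replayed as two directional messages, each carrying the original payload plus a receiver UID; this costs at most $2(s+\log n)$ bytes of directional traffic. Summing over all broadcast messages, and using the fact recorded in Section~\ref{ssec:bc} that every broadcast message already carries the sender's UID and so has size at least $\log n$ bytes — whence $\log n\cdot\textrm{bMC}\leq\textrm{bBC}$ — the induced directional algorithm $a'$ exchanges at most $2\,\textrm{bBC} + 2\log n\cdot\textrm{bMC}\leq 4\,\textrm{bBC}$ bytes on $G$. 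Since $a'$ is a legitimate directional consensus algorithm on a sparse graph, Proposition~\ref{lemma:bcring} yields $4\,\textrm{bBC}(f,\{G\})\in\Omega(n\log^2 n+n\,b)$, and therefore $\textrm{bBC}(f,\mathcal G)\in\Omega(n\log^2 n+n\,b)$. (The function $\psi(n,R)$ in the statement is taken to be the one furnished by Proposition~\ref{lemma:bcring}, adjusted for the bounded round blow-up of the simulation.)

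Equivalently — and this is the ``twofold improvement'' remark alluded to before the statement — one can argue the two additive terms separately: by Proposition~\ref{prop:byte1} any broadcast algorithm sends $\Omega(n\log n)$ broadcast messages, each of which must carry at least the sender's UID, i.e.\ $\log n$ bytes, which gives the $\Omega(n\log^2 n)$ term; and because $f$ depends on all its arguments, on the line/ring there is a cut edge across which $\Omega(n)$ initial values must be conveyed, and the locally-sensitive or extractive hypothesis (together with the large-$\mathcal X$ assumption, exactly as in the proof of Proposition~\ref{lemma:ccring_noncomp}) forces these to be transmitted essentially at full resolution $b$ bytes each — a single broadcast near the cut serves both sides but saves only a factor of two — which gives the $\Omega(n\,b)$ term. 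Adding the two contributions yields $\Omega(n\log^2 n + n\,b)$.

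The main obstacle, inherited from Proposition~\ref{lemma:bcring}, is the $\Omega(n\,b)$ term: one must rule out that the ``far'' side of the cut could be summarized by an aggregate of size $o(n\,b)$. This is precisely where the hypotheses on $f$ are needed — for an extractive $f$ the output is literally one of the $x_j$'s and the receiving side cannot know which without essentially learning each candidate to precision $b$, while for a locally-sensitive $f$ the continuous dependence of the output on each coordinate precludes lossy compression — and the cleanest way to make this airtight is the reduction above, which simply inherits the argument of Proposition~\ref{lemma:bcring} rather than re-deriving it.
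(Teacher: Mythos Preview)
Your proposal is correct and follows exactly the route the paper indicates: the paper omits the proof, stating only that it is ``a simple consequence of the fact that on ring topologies local broadcasts only offer a twofold improvement in message complexity,'' and your simulation of a broadcast algorithm by a directional one on the ring (with constant-factor byte blow-up) followed by invoking Proposition~\ref{lemma:bcring} is precisely that argument fleshed out. One small caveat: Proposition~\ref{lemma:bcring} as stated carries the extra hypothesis that $f$ be \emph{hierarchically computable}, which Proposition~\ref{prop:byte2} does not assume; but this hypothesis is not actually used in the lower-bound direction (it is only needed for tightness), so your reduction goes through under the hypotheses of Proposition~\ref{prop:byte2} as written.
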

Note that the lower bound on byte complexity for local broadcast schemes is lower than the corresponding lower bound for directional communication, as one might intuitively expect.

\section{Tightness of Lower Bounds on \\Achievable Performance}\label{sec:upper}
In this section we study the tightness of the bounds derived in Section \ref{sec:lbs}. 

\subsubsection{Time complexity}
The bound on time complexity is tight and is achieved by a \emph{flooding algorithm}, which repeatedly transmits its initial value and \emph{all} received information to its neighbors (this result is well-known in the context of leader election -- its extension to our setting is straightforward).

\begin{proposition}[Tightness of time complexity]
\label{lemma:tctight}
For a given consensus function $f$ and class of graphs $\mathcal G$ with $n$ nodes,  $\textrm{TC}(f, \mathcal G) \in \Theta(n)$. 

\end{proposition}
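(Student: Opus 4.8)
The plan is to combine the lower bound already established in Proposition~\ref{prop:lbtime} with a matching $O(n)$ upper bound, witnessed by a flooding algorithm, so that only the upper bound requires a new argument.

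The flooding algorithm I would analyze works as follows: every node maintains a table of pairs (UID, value), initialized with its own UID and initial value $x_i$; at each activation it transmits its entire current table to all of its neighbors (in the directional scheme, one copy per neighbor), and upon receiving a table it merges the received pairs into its own. The timing claim then follows from a straightforward induction on hop-distance: if the distance between $i$ and $j$ in $G$ is $h$, then after the ``information wave'' has advanced $h$ hops the pair $(\text{UID}_i, x_i)$ is contained in node $j$'s table. By the non-blocking assumption, each hop of the wave is completed within $l+d$ time units, so after at most $\diam(G)\,(l+d)$ time units every node's table equals $\{(\text{UID}_1,x_1),\ldots,(\text{UID}_n,x_n)\}$; since $G$ is connected on $n$ nodes, $\diam(G)\le n-1$, giving completion within $(n-1)(l+d)\in O(n)$ time units. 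At that point each node holds the full argument vector and, having unlimited computational power, evaluates $f$ locally; note this requires nothing of $f$ beyond computability, matching the claim that the time bound holds for \emph{all} consensus functions.

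The one point that needs care --- and the step I expect to be the main obstacle --- is \emph{termination}: a node must decide to stop and output without a global clock. The clean option is to exploit that $n$ (an upper bound on $\diam(G)$) is part of the problem data, so each node counts $n-1$ send/receive phases and then halts; phases can be delimited locally because, by the non-blocking assumption, within $l+d$ time units of a node's transmission every neighbor has transmitted back, so a node can safely advance its phase counter after this bounded wait, and the resulting schedule still fits in $O(n)$ time units. If one prefers not to assume $n$ is known, one appends a standard termination-detection wave (e.g., a convergecast over a breadth-first spanning tree rooted at the minimum-UID node), which contributes only an additional $O(n)$ time units and does not change the asymptotics. In either case $\textrm{TC}(f,\mathcal G)\in O(n)$, and together with Proposition~\ref{prop:lbtime} this yields $\textrm{TC}(f,\mathcal G)\in\Theta(n)$.
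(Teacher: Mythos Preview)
Your proposal is correct and follows essentially the same approach as the paper: exhibit flooding as an $O(n)$ upper bound (via the hop-by-hop propagation argument bounded by $\diam(G)\le n-1$) and combine with Proposition~\ref{prop:lbtime}. Your treatment is in fact more careful than the paper's, which omits the termination discussion entirely and simply asserts that flooding has time complexity $O(n)$.
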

\begin{proof}
 In a flooding algorithm, information travels from a node $v$ to any node at distance $k$ from $v$ in no more than $k\, (l+d)$ time units (recall that, within our model, by the fairness and non-blocking assumptions, each node executing a flooding algorithm will transmit a message at least once every $l+d$ time units). Since the distance between any pair of nodes in $G$ is smaller than or equal to $n$, and since each node can correctly compute the value of $f$ once it has knowledge of all initial values, one can conclude that a flooding algorithm has time complexity $O(n)$. Comparison with the the lower bound in Proposition \ref{prop:lbtime} immediately leads to the claim.
\end{proof}

\subsubsection{Message complexity}
Remarkably, a slight variant of the GHS algorithm \cite{RGG-PAH-PMS:83, BAw:87} (which builds a minimum spanning tree) achieves message optimality both for dense and for sparse graphs.
\begin{proposition}
\label{lemma:mctight}
Let $f$ be a locally-sensitive or extractive consensus function. Then $\textrm{MC}(f,\mathcal G_d)\in \Theta(|E|)$. Furthermore, there exists a function $\psi(n,R)$ such that, if the cardinality of $\mathcal X$ is greater than or equal to $\psi(n,R)$, then $\textrm{MC}(f, \mathcal G_s) \in \Theta(n\log n)$.
\end{proposition}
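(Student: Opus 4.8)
The plan is to establish the two matching upper bounds by exhibiting an explicit algorithm — a minor variant of the Gallager--Humblet--Spira (GHS) algorithm — counting the messages it exchanges, and then combining the resulting estimates with the lower bounds of Propositions~\ref{lemma:ccfull} and~\ref{lemma:ccring_noncomp}.

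First I would recall that GHS builds a minimum spanning tree of $G$ in the asynchronous message-passing model using $O(|E| + n\log n)$ messages, and that (as a byproduct of the final merge) it distinguishes a leader node at no extra asymptotic cost. To instantiate GHS in our setting, assign to each edge $(i,j)$ the weight given by the lexicographically ordered pair of endpoint UIDs; since UIDs are distinct, this yields pairwise distinct edge weights, the minimum spanning tree is unique, and GHS applies verbatim. The fairness and non-blocking assumptions of our model are exactly the asynchrony hypotheses under which GHS is proven correct, so no modification of the protocol is needed.

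Then I would append a two-phase convergecast/broadcast over the tree rooted at the elected leader: each leaf sends its initial value toward the root, each internal node forwards upward once it has heard from all of its children, the root thereby learns $(x_1,\dots,x_n)$ (up to a relabelling it can recover from the UIDs it receives), computes $f$, and sends the result back down the tree, at which point every node outputs it and halts. This phase uses exactly $2(n-1) = O(n)$ messages — crucially, message \emph{count} is independent of message \emph{size}, so the fact that $f$ need not be hierarchically computable (hence upward messages may carry up to $n$ values) is immaterial here and affects only byte complexity. Correctness and finite-time termination are immediate: the tree is spanning and has depth at most $n$, no node stops before forwarding the answer, and the leader's stop is triggered by sending.

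Combining, the whole algorithm uses $O(|E| + n\log n)$ messages. On dense graphs $|E| \ge n\log n$, so this is $O(|E|)$; together with the $\Omega(|E|)$ lower bound of Proposition~\ref{lemma:ccfull} this gives $\textrm{MC}(f,\mathcal G_d)\in\Theta(|E|)$. On sparse graphs $|E| < n\log n$, so it is $O(n\log n)$; together with the $\Omega(n\log n)$ lower bound of Proposition~\ref{lemma:ccring_noncomp} (valid once $|\mathcal X|\ge\psi(n,R)$, and witnessed by ring topologies, which are sparse) this gives $\textrm{MC}(f,\mathcal G_s)\in\Theta(n\log n)$. The only real subtlety is the bookkeeping needed to confirm that the GHS primitive, the leader election it implies, and the appended convergecast coexist without breaking the asynchronous guarantees or inflating the message count beyond $O(|E| + n\log n)$; everything else is routine.
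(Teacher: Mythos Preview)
Your proposal is correct and follows essentially the same approach as the paper: run GHS to obtain a rooted spanning tree with $O(|E|+n\log n)$ messages, then convergecast initial values to the root and broadcast the computed consensus value back down in $O(n)$ further messages, and combine with the lower bounds of Propositions~\ref{lemma:ccfull} and~\ref{lemma:ccring_noncomp}. The only cosmetic differences are that the paper has the root first issue an explicit request broadcast before the convergecast (three tree passes instead of your two), and does not spell out the UID-based edge-weight assignment you mention; neither affects the asymptotic count.
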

\begin{proof}
Consider the following variant of the GHS algorithm. First,  a rooted minimum spanning tree (MST) is constructed by executing the GHS algorithm. This operation requires $O(n\log n + |E|)$ messages. Note that at the end of the GHS algorithm the node that is the root of the MST is aware of this fact. The root node then requests from all nodes their initial values with a tree broadcast. After a node is contacted, it waits until all descendants (if any) have sent it their initial value; it then forward its initial value and its descendants' to its parent. Finally, the root computes the consensus function and sends the consensus value to all nodes. Given the tree structure, tree broadcasts and information collection require exactly $n-1$ messages each. The claim then follows.
\end{proof}

\subsubsection{Byte complexity}
To prove tightness of the byte complexity bound, we need to assume that the locally-sensitive or extractive consensus function is hierarchically computable. 
\begin{proposition}
\label{lemma:bctight}
Assume messages carry the sender and/or the receiver UIDs. Let $f$ be a locally-sensitive or extractive consensus function that is hierarchically computable. Then $\textrm{BC}(f, \mathcal G_d) \in \Theta(|E|\log n + nb)$. Furthermore, there exists a function $\psi(n,R)$ such that, if $|\mathcal X| \geq \psi(n,R)$, then   $\textrm{BC}(f, \mathcal G_s) \in \Theta((n\log n)\log n + nb)$.
\end{proposition}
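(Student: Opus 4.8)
The plan is to exhibit an algorithm matching the lower bounds in Propositions \ref{lemma:bcring} and \ref{lemma:bcfull}, namely the GHS-variant already used in the proof of Proposition \ref{lemma:mctight}, and to carefully account for the \emph{size} of each message rather than merely their number. First I would recall that the modified GHS algorithm proceeds in three phases: (i) construction of a rooted MST via GHS, (ii) a tree-broadcast from the root requesting initial values followed by an upward convergecast that aggregates them, and (iii) a final tree-broadcast of the consensus value. Since $f$ is hierarchically computable, $f(x_1,\dots,x_n) = x_1 \ast x_2 \ast \cdots \ast x_n$ with $\ast$ commutative and associative, so during the convergecast each internal node need not forward all of its descendants' raw values: it can transmit the partial aggregate $\ast_{j \in \text{subtree}} x_j$, a single element of the codomain, of size $O(b)$ (here I would invoke, or explicitly assume as is standard, that the aggregate has size comparable to that of a single initial condition, e.g. $O(b)$ or $O(b + \log n)$ bytes). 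This is exactly why hierarchical computability is needed — without it the convergecast messages could grow to contain $\Theta(n)$ values, blowing up the byte count to the flooding bound $\Theta(n^2 b)$.

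Next I would count bytes phase by phase. In phase (i), the GHS algorithm exchanges $O(n\log n + |E|)$ messages, each of which is a control message carrying a constant number of UIDs and edge weights, hence $O(\log n + b)$ bytes; this contributes $O((n\log n + |E|)(\log n + b))$ bytes. In phases (ii) and (iii), exactly $n-1$ messages are sent along tree edges in each of the three sweeps; each carries a sender and/or receiver UID ($O(\log n)$ bytes) plus at most one aggregated value ($O(b)$ bytes), for a total of $O(n(\log n + b))$ bytes. Summing, the byte complexity of the algorithm is $O((n\log n + |E|)(\log n + b) + nb)$. For dense graphs, $|E| \geq n\log n$, so this is $O(|E|(\log n + b) + nb) = O(|E|\log n + |E| b + nb)$; I would note $|E| b$ is dominated since on dense graphs the lower bound already contains $|E|\log n$ and... actually here I should be careful — I would instead observe $|E| b \le |E|\log n \cdot b$ is not obviously absorbed, so the cleaner route is to bound the aggregate-carrying messages: only the $O(n)$ tree messages carry the $b$-sized payload, the $O(|E|)$ GHS control messages carry only $O(\log n)$ bytes (UIDs and weights, not initial values). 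Redoing the count: GHS control traffic is $O((n\log n + |E|)\log n)$ bytes, tree traffic is $O(n(\log n + b))$ bytes, total $O(|E|\log n + nb)$ for dense graphs and $O((n\log n)\log n + nb)$ for sparse graphs, matching the lower bounds. Combining with Propositions \ref{lemma:bcring} and \ref{lemma:bcfull} gives the claimed $\Theta$ results, where for the sparse case the cardinality hypothesis $|\mathcal X| \geq \psi(n,R)$ is inherited verbatim from Proposition \ref{lemma:bcring}.

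The main obstacle I anticipate is the bookkeeping in the dense case: one must resist the naive estimate in which every one of the $|E|$ messages is charged $O(\log n + b)$ bytes, since $|E| b$ could a priori exceed the target $|E|\log n + nb$ (when $b \gg \log n$ and $|E| \gg n$). The resolution — which I would state explicitly — is that the GHS phase never transmits initial values, only UIDs, edge weights, and fixed-size control tokens, so its per-message size is genuinely $O(\log n)$, independent of $b$; only the $n-1$ tree-convergecast and tree-broadcast messages carry a $b$-sized payload, and there are merely $O(n)$ of them. A secondary subtlety is justifying that a partial aggregate under $\ast$ does not grow with the number of operands combined; I would handle this by the standing assumption (consistent with all the examples in the paper — mean, MAX, MIN, argmin of a sum of quadratics) that elements of the codomain of $f$, and hence all partial aggregates, have byte-size $O(b)$.
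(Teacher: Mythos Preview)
Your proposal is correct and follows essentially the same approach as the paper: exhibit the GHS variant from Proposition~\ref{lemma:mctight}, observe that the $O(n\log n + |E|)$ GHS control messages each have size $O(\log n)$ (they carry only UIDs and edge weights, never initial values), and that the $O(n)$ tree convergecast/broadcast messages each have size $O(b)$ thanks to hierarchical computability, then combine with the lower bounds of Propositions~\ref{lemma:bcring} and~\ref{lemma:bcfull}. Your write-up is in fact more explicit than the paper's about the dense-case bookkeeping pitfall (separating the $O(\log n)$-sized GHS traffic from the $O(b)$-sized tree traffic so that $|E|b$ never appears) and about why the partial aggregates stay $O(b)$; the paper simply asserts both points in one line each.
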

\begin{proof}
Consider the same variant of the GHS algorithm introduced in the proof of Lemma \ref{lemma:mctight}. As discussed, the GHS algorithm computes a rooted MST in $O(n \log n + |E|)$ messages, and the consensus function is computed with further $O(n)$ messages. Each message exchanged by the GHS algorithm during the construction of the tree has size $O(\log n)$ \cite{RGG-PAH-PMS:83}. Furthermore, under the assumption that the consensus function is hierarchically computable, each message exchanged transmitted along the tree has size $O(b)$ and there are $O(n)$ such messages. The claim then follows.
\end{proof}

\subsubsection{Tightness of bounds for  local broadcast communication}
The study of the tightness of the bounds for local broadcast communication hinges upon a slightly more intricate variation of the GHS algorithm. Specifically, the message complexity of the GHS algorithm is $O(n\log n + |E|)$; the $|E|$ factor is due exclusively to challenge-reject message pairs exchanged by nodes during the search for a minimum weight outgoing edge (MWOE). In the MWOE search phase, each node contacts the neighbor connected to its lowest-weight edge: the neighbor's reply is positive or negative depending on the two nodes' group IDs and can be delayed based on the two nodes' levels (we refer the interested reader to \cite{RGG-PAH-PMS:83} for an in-depth definition of this terminology).
Consider, now, a broadcast protocol, and let each node simply \emph{broadcast} its level and group ID every time these are updated (i.e., at most once per level). We remark that a node can assume at most $\log n$ levels during execution. Neighbor nodes locally record the broadcasts they receive and look them up when looking for the MWOE. It is easy to see that such an algorithm \emph{emulates} the execution of the GHS algorithm (and, hence, inherits its correctness).

\begin{proposition}[Broadcast message complexity of consensus]
\label{lemma:broadcastmc}
Let $f$ be a locally-sensitive or extractive consensus function and $\mathcal G$ the set of graphs with node set $n$. There exists a function $\psi(n,R)$ such that, if the cardinality of $\mathcal X$ is greater than or equal to $\psi(n,R)$, then $\textrm{bMC}(f, \mathcal G) \in \Theta(n\log n)$.
\end{proposition}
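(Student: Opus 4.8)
The plan is to prove the two halves of the claim separately: the lower bound $\textrm{bMC}(f,\mathcal G)\in\Omega(n\log n)$ and the matching upper bound $\textrm{bMC}(f,\mathcal G)\in O(n\log n)$.

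For the lower bound there is essentially nothing new to do. The assertion, including the quantifier ``there exists $\psi(n,R)$ such that if $|\mathcal X|\geq\psi(n,R)$ then $\textrm{bMC}(f,\mathcal G)\in\Omega(n\log n)$'', is exactly Proposition \ref{prop:byte1}, whose hypotheses on $f$ (locally-sensitive or extractive) and on $\mathcal X$ coincide with the ones here; I would simply invoke it and carry over the same function $\psi$. The upper bound, by contrast, will hold unconditionally, so the $\psi$ in the final statement is inherited solely from the lower bound.

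For the upper bound I would exhibit the broadcast variant of the GHS algorithm described immediately before the statement and count its broadcast messages. During the construction of the rooted minimum spanning tree, the only broadcasts a node issues are announcements of its current (level, group ID) pair, and it re-announces only when one of the two components changes; since a node's level is nondecreasing and never exceeds $\log n$, each node issues $O(\log n)$ broadcasts, so the MST phase costs $O(n\log n)$ broadcast messages. The decisive point — and the reason this improves on the directional case — is that the $O(|E|)$ term in the directional message complexity of GHS, which comes entirely from the challenge/reject message pairs of the minimum-weight-outgoing-edge (MWOE) search, is now eliminated: a node determines the status of a candidate edge by looking up the most recent broadcast it has received from the relevant neighbor rather than by sending it a message. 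Once the rooted tree is in place, the consensus value is computed exactly as in the proof of Proposition \ref{lemma:mctight}: a top-down request from the root (one broadcast per internal node), a bottom-up convergecast in which each node, after collecting its descendants' values, broadcasts the resulting information so that its parent hears it (one broadcast per node), and a top-down dissemination of $f$ (one broadcast per internal node). Each of these three phases costs $O(n)$ broadcasts, so the algorithm's overall broadcast message complexity is $O(n\log n + n)=O(n\log n)$. Combining the two bounds yields $\textrm{bMC}(f,\mathcal G)\in\Theta(n\log n)$.

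I expect the main obstacle to lie in verifying that the broadcast emulation of GHS is faithful while staying within the $O(\log n)$-broadcasts-per-node budget. In particular, one must check that the GHS rule of \emph{delaying} a reply until the querying node's level is sufficiently high can be honored in the broadcast model simply by waiting for the appropriate future broadcast, so that no node is ever forced to re-broadcast or re-query in order to make progress; the correctness of the emulation — and hence of the whole algorithm — then follows from that of GHS \cite{RGG-PAH-PMS:83, BAw:87}. The message counts for the three tree phases are routine once the tree exists; the delicate part is the MWOE-search phase.
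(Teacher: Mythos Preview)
Your proposal is correct and follows essentially the same approach as the paper: invoke Proposition~\ref{prop:byte1} for the lower bound, and for the upper bound use the broadcast emulation of GHS described just before the statement, counting $O(\log n)$ broadcasts per node for the (level, group ID) announcements plus $O(n)$ broadcasts for the tree-based collection and dissemination. If anything, you are more explicit than the paper's own proof, which does not separately cite the lower bound and does not discuss the delay-rule issue you flag; that concern is legitimate but, as you anticipate, is resolved by having a node wait for the relevant neighbor's next broadcast rather than re-querying.
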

\begin{proof}
Consider a distributed consensus algorithm that first \emph{emulates} the GHS algorithm to construct a rooted spanning tree (as discussed above) and then performs the sequence of initial value requests and routings  outlined in the proof of Proposition \ref{lemma:mctight}. The overall number of broadcasts exchanged per level during the emulation of the GHS algorithm is  $\Theta(n)$, since each agent only updates its level and group ID once per level, and the number of levels is $O(\log n)$. Given that the initial values requests and routings require $O(n)$ broadcast operations, the algorithm has a broadcast message complexity of $O(n\log n)$: the claim follows.
\end{proof}

The tightness of the bound on byte complexity follows immediately from Proposition \ref{lemma:broadcastmc} and its proof is omitted in the interest of brevity.
\begin{proposition}[Broadcast byte complexity of consensus]
\label{lemma:broadcastbc}
Assume messages carry the sender or the receiver UIDs. Let $f$ be a locally-sensitive or extractive consensus function and $\mathcal G$  a the set of graphs with $n$ nodes. There exists a function $\psi(n,R)$ such that, if the cardinality of $\mathcal X$ is greater than or equal to $\psi(n,R)$, then   
$\textrm{bBC}(f, \mathcal G) \in \Theta(n\log^2 n + nb))$.
\end{proposition}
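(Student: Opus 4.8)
The plan is to combine the broadcast-message-complexity result of Proposition \ref{lemma:broadcastmc} with a per-message byte accounting, exactly mirroring the passage from Proposition \ref{lemma:mctight} to Proposition \ref{lemma:bctight} in the directional case. First I would invoke the lower bound of Proposition \ref{prop:byte2}, which already gives $\textrm{bBC}(f, \mathcal G) \in \Omega(n\log^2 n + nb)$ under the stated cardinality hypothesis on $\mathcal X$; this takes care of one half of the $\Theta$. The remaining work is to exhibit an algorithm matching this bound from above.

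For the upper bound I would analyze the very same broadcast algorithm used in the proof of Proposition \ref{lemma:broadcastmc}: it emulates the GHS algorithm to build a rooted minimum spanning tree (with each node broadcasting its level and group ID at most once per level, over $O(\log n)$ levels), and then runs the request/routing phase from the proof of Proposition \ref{lemma:mctight} to collect initial values up the tree and disseminate the consensus value down. The byte count splits into two parts. During the MST-construction phase there are $\Theta(n\log n)$ broadcasts (by Proposition \ref{lemma:broadcastmc}), and each such broadcast carries only a level (an integer in $\{0,\dots,\log n\}$, hence $O(\log\log n)$ bytes, dominated by $O(\log n)$), a group ID (a UID, $O(\log n)$ bytes under our convention), and the sender UID ($O(\log n)$ bytes); so each message has size $O(\log n)$ and this phase contributes $O(n\log^2 n)$ bytes. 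During the collection/dissemination phase, because $f$ is hierarchically computable, each of the $O(n)$ messages routed along the tree need only carry a partial $\ast$-aggregate, which lives in the codomain and has size $O(b)$, plus an $O(\log n)$-byte UID; this contributes $O(nb + n\log n)$ bytes. Summing, the algorithm uses $O(n\log^2 n + nb)$ broadcast bytes, which together with the lower bound yields $\textrm{bBC}(f, \mathcal G) \in \Theta(n\log^2 n + nb)$.

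The only subtle point — and the step I would be most careful about — is the byte size of the GHS messages under the broadcast emulation: one must confirm that replacing GHS's directional challenge/reject traffic by periodic level/group-ID broadcasts does not inflate individual message sizes beyond $O(\log n)$, and that the hierarchically-computable aggregates genuinely fit in $O(b)$ bytes (this uses associativity and commutativity of $\ast$ so that a single running aggregate, rather than a growing list of raw values, suffices). Both facts are inherited from the original GHS analysis \cite{RGG-PAH-PMS:83} and from the argument already given for Proposition \ref{lemma:bctight}, so the proof is indeed a short composition of earlier results; accordingly, as the paper states, it may be omitted in the interest of brevity.
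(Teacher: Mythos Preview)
Your proposal is correct and follows exactly the approach the paper intends: the paper omits the proof, stating it follows immediately from Proposition~\ref{lemma:broadcastmc}, and you have faithfully expanded that sketch by pairing the $\Omega$ bound of Proposition~\ref{prop:byte2} with a per-message byte count of the broadcast-emulated GHS algorithm. Your careful identification of the need for hierarchical computability (so that tree messages stay $O(b)$ rather than growing as lists) is apt---the statement as printed omits this hypothesis, but it is clearly required (cf.\ Proposition~\ref{lemma:bctight} and the caption of Table~\ref{tab:consbounds}).
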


\section{Discussion and conclusions}
\label{sec:conclusions}
In Table \ref{tab:consbounds} we provide a synoptic view of our results.  Table \ref{tab:consbounds} also includes results for $D$-connected networks, that is networks  where the edge set is time-varying and there exists a constant $D\in \reals_{>0}$ (possibly unknown to the nodes) such that the union of all edges appearing in the time interval $[t, t+D)$ constitute a \emph{connected} graph. 
Due to page limitations,  we omit the proofs for the bounds pertaining to $D$-connected networks: such bounds can be derived with techniques very similar to the ones used in this paper.

Table \ref{tab:consbounds} elucidates the relative advantages of different approaches to distributed computation. On static networks, the modified versions of the GHS  algorithm discussed in this paper \emph{simultaneously} achieve optimal time, message, byte, and storage complexity under mild assumptions regarding the consensus function, both for directional and broadcast communication, and both for sparse and dense graphs. On the other hand, the GHS algorithm is  not readily applicable to time-varying  networks and is sensitive to single-points of failure (since it relies on the construction of a spanning tree). In other words, a GHS algorithm has minimal robustness margins to the disruption of a communication channel.

A flooding algorithm is time-optimal, but as one can expect has poor message and byte  complexity. Also the storage complexity is worst among all considered algorithms. On the other hand, flooding is maximally robust to communication disruptions. Also, somewhat surprisingly, this study shows how a simple flooding algorithm outperforms an average-based algorithm (that solves the \emph{specific} consensus problem with consensus function $f(x) = \frac{1}{n}\mathbf{1}^T x$) with respect to all performance metrics except storage complexity (which is arguably a minor concern for modern embedded systems).

Finally, the hybrid clustering algorithm  introduced by the authors in \cite{FR-MP:13} has performance intermediate between those of flooding and GHS, as a function of a tuning parameter $m$.  The main advantage of this algorithm is to ``trade" some of the optimality of GHS with a tunable ``degree of robustness".

As discussed, GHS-like  algorithms (and also the hybrid algorithm in \cite{FR-MP:13}) can not be readily applied to dynamic settings. Note, however, 
 that the execution time of GHS  is $O(n)$, an order of magnitude faster than average-based algorithms: if reconfigurations are infrequent (i.e., their frequency is much lower than $O(1/n)$),  these algorithms  can indeed be applied to nominally dynamic networks. 
 
We conclude this paper with a discussion of the limitations of our analysis, which immediately reflect into a number of interesting directions for future research. First, optimal algorithms for $D$-connected networks are not currently known (of course, our lower bounds may not be tight), which represents a key area of study. Second, while locally-sensitive and extractive consensus functions represent a fairly large class of consensus functions, it is of interest to generalize our bounds on message, byte, and storage complexity even further. Third, we employ a worst-case approach over the classes of \emph{sparse} and \emph{dense} graphs. An interesting direction for future research would be (i) to derive bounds on a finer partition of the class of possible graphs, e.g. by parameterizing the graphs by their maximum node degree, and (ii) by embedding the problem within a probabilistic structure,  to derive performance bounds with respect to sets of graphs randomly drawn from a given probability distribution, so that one can derive measures of average (as opposed to worst-case) performance.  Fourth, our model essentially does not include the notion or robustness with respect to either stopping (i.e., for malfunctions) or byzantine (i.e., malicious) failures, which is an aspect of pivotal importance for the reliable deployment of cyber-physical systems. Finally, a practical implementations of the algorithms discussed in this paper could shed additional light on the relative benefits of the different approaches.
 
Overall, we hope that this work will prompt researchers in the field of multi-agent systems to compare their results against the fundamental lower bounds derived in this paper to properly evaluate the relative benefits of their approach.
\vspace{-0.0445cm}

\bibliographystyle{IEEEtran}
\bibliography{../../../bib/alias,../../../bib/main}
\end{document}